\theoremstyle{definition}
\acrodef{AoI}{age of information}
\acrodef{PAoI}{peak \ac{AoI}}
\acrodef{BS}{base station}
\acrodef{HLF}{Hyperledger Fabric}
\acrodef{HPPP} {homogeneous Poisson point process}
\acrodef{BeMN}{blockchain-enabled monitoring networks}
\acrodef{MVCC}{multi-version concurrency check}
\acrodef{STP}{successful transmission probability}
\acrodef{FCFS}{first-come-first-serve}
\acrodef{PDF}{probability density function}
\acrodef{i.i.d.}{independent and identically distributed}
\acrodef{KS}{Kolmogorov-Smirnov}
\acrodef{CDF}{cumulative distribution function}
\acrodef{CCDF}{complementary cumulative distribution function}
\acrodef{PoW}{proof-of-work}
\acrodef{SD}{standard deviation}
\newtheorem{definition}{Definition}
\newtheorem{theorem}{Theorem}
\newtheorem{lemma}{Lemma}
\newtheorem{cor}{Corollary}
\newtheorem{proposition}{Proposition}
\newtheorem{example}{Example}
\newtheorem{remark}{Remark}
\newtheorem{property}{Property}
\newtheorem{algorithm}{Algorithm}
\newtheorem{assumption}{Assumption}
\newcommand{\limit}[2]{\lim_{#1 \to\infty} #2}
\newcommand{\E}{\mathbb{E}}
\renewcommand{\P}{\mathbb{P}}
\newcommand{\integral}[2]{\int_{#1}^{#2}}
\newcommand{\summation}[2]{\sum_{#1}^{#2}}
\newcommand{\aoivio}{$P_v$}
\newcommand{\Aoivio}{P_v}
\newcommand{\avgaoi}{$\bar{\Delta}$}
\newcommand{\Avgaoi}{\bar{\Delta}}
\newcommand{\paoivio}{$P_{\text{p}^v}$}
\newcommand{\Paoivio}{P_{\text{p}^v}}
\newcommand{\TotLat}[1][k]{T_{\text{tot},#1}}
\newcommand{\blocksize}{$B$}
\newcommand{\Blocksize}{B}
\newcommand{\timeout}{$T$}
\newcommand{\Timeout}{T}
\newcommand{\Teff}{T_{\text{eff}, k}}
\newcommand{\Tint}{T_\text{int}}
\newcommand{\TargetSTP}{\zeta}
\newcommand{\TargetAoI}{v}
\newcommand{\PAoI}{\Delta_{\text{p}, k}}
\newcommand{\Location}{x_{\text{o}}}
\newcommand{\transLat}{$T_{\text{tx}, \Location}$}
\newcommand{\TransLat}{T_{\text{tx}, \Location } }
\newcommand{\TransLatk}{T_{\text{tx},\Location, k}}
\newcommand{\Remain}{T_{v \hspace{-0.2mm}, \hspace{-0.1mm} {\Location}}}
\begin{document}
	\newcommand{\paperTitle}{Title}



\title{ Ensuring Data Freshness for\\ 
	Blockchain-enabled Monitoring Networks}

\author{
	\vspace{0.2cm}
	
	\thanks{
		M.\ Kim, S.\ Lee, C.\ Park, and J.\ Lee are with    
		the Department of Information and Communication Engineering, Daegu Gyeongbuk Institute of Science and Technology, Daegu, South Korea, 42988      
		(e-mail:\{\texttt{kms0603}, \texttt{seuho2003}, \texttt{pcw0311}, \texttt{jmnlee}\}@dgist.ac.kr). 
		
		W.\ Saad is with the Wireless@VT Group, Bradley Department of Electrical and Computer Engineering, Virginia Tech, Blacksburg, VA 24061 USA (e-mail: \texttt{walids@vt.edu}).
		
		The material in
		this paper will be presented, in part, at the IEEE International Conference on Communications,
		Montreal, Canada, Jun. 2021 \cite{MKSLCPJL:21}
		
		This research was supported, in part, by the U.S. Office of Naval Research (ONR) under  MURI Grant N00014-19-1-2621.
	}
	
\IEEEauthorblockN{
	Minsu~Kim, Sungho~Lee, Chanwon~Park,
	Jemin~Lee, \textit{Member, IEEE}, and
	Walid Saad, \textit{Fellow, IEEE}
}
%

%

}

\maketitle 

%

%

%

%

\acresetall
	\begin{abstract}
	The age of information (AoI) is a recently proposed metric for quantifying data freshness in real-time status monitoring systems where timeliness is of importance. In this paper, the problem of characterizing and controlling the AoI is studied in the context of blockchain-enabled monitoring networks (BeMN). In BeMN, status updates from sources are transmitted and recorded in a blockchain. To investigate the statistical characteristics of the AoI in BeMN, the transmission latency and the consensus latency are first rigorously modeled. 
	Then, the average AoI, the AoI violation probability, and the peak AoI violation probability are derived in a closed form so as to quantify the performance of BeMN.
	Furthermore, a simplified form is derived for the AoI violation probability, and it is shown that this quantity can capture the upper or lower bound of the actual AoI violation probability. Simulation results show that each BeMN parameters (i.e., target successful transmission probability, block size, and timeout) can have conflicting effects on the AoI-related performance. Subsequently, design insights are provided to maintain the freshness of the status data in BeMN. Then, experimental results with a real Hyperledger Fabric platform further validate the accuracy of our modeling and analysis.
	\end{abstract} 
	
\begin{IEEEkeywords}
	Age of information, blockchain, Hyperledger Fabric, latency, stochastic geometry
\end{IEEEkeywords}

\section{introduction}	
The emergence of blockchains has ushered in a new breed of decentralized data management platforms that have been adopted in a broad range of real-time wireless Internet of Things (IoT) monitoring applications ranging from healthcare systems to pollution detection and autonomous factories \cite{ON:18}. 
In such real-time monitoring systems, data integrity is one of the important requirements in order to prevent unintended or malicious changes in the data. 
Through integration with blockchains, real-time monitoring systems can maintain data integrity in a distributed manner without the need for a central authority.
Although blockchains can provide a secure data management platform for monitoring systems by ensuring data integrity, they cannot guarantee data freshness. The use of outdated data when making system-wide decisions, such as raising an alarm due to a high temperature or level of pollution, can lead to incorrect outputs. Those outputs can further jeopardize the operation of a whole system. Therefore, it is of importance to maintain fresh data for blockchain-enabled monitoring systems to prevent undesired outputs.

To quantify the degree of data freshness, the notion of \ac{AoI} has been proposed in \cite{SaRoMa:12}. The \ac{AoI} is defined as the elapsed time from the generation of the latest received status update. Several variants of the \ac{AoI} recently appeared, including the so-called \ac{PAoI}, which measures the largest staleness of information \cite{MaMaAn:16}. In blockchain-enabled monitoring systems, status updates are recorded in distributed ledgers. Hence, the \ac{AoI} can be used as a suitable metric to quantify data freshness in the ledgers of a blockchain. However, in order to optimize the \ac{AoI} and maintain data freshness in a blockchain, several challenges must be addressed such as accounting for the transaction processing latency to update ledgers.

There has been a number of prior works that looked at the measurement and analysis of transaction processing latency in blockchain platforms \cite{LeDo:19, LaPe:20, AlCa:20, MABS:19}.
The authors in \cite{LeDo:19} and \cite{LaPe:20} implemented a permissioned blockchain integrated IoT platform as a proof of concept for monitoring networks, and they evaluated the performance of the proposed networks in terms of latency. The work in \cite{AlCa:20} investigated the effects of block generation frequency on the end-to-end latency in Ethereum-based blockchain platforms under cellular and Wi-Fi networks.
The authors in \cite{MABS:19} studied the effects of the number of network hops and replica nodes on the end-to-end latency in Byzantine fault tolerance-based blockchain platforms. 
Although the latency for a status update in blockchain platforms is studied and evaluated in \cite{LeDo:19, LaPe:20, AlCa:20, MABS:19}, data freshness was not characterized. For maintaining data freshness in blockchain-enabled networks, one must consider jointly the latency in processing transactions, the transaction generation frequency, and the communication latency.

Recently, in \cite{ArAb:20}, the authors analyzed data freshness, in terms of average \ac{AoI}, for a public blockchain and the IOTA platform. However, the distribution of the \ac{AoI} is not studied although it is essential to show the percentage of the time guaranteeing a certain target \ac{AoI}. Moreover, the latency to process a transaction is simply modeled to follow an exponential distribution, without the validation in a real blockchain platform. We also acknowledge that a number of works \cite{DoSa:20, BaYu:20, AlMo:20, MAHD:19, XSZN:19, BoWa:19} looked at different aspects related to the analysis and optimization of the \ac{AoI} and \ac{PAoI} in monitoring networks. Moreover, the work in \cite{ChWa:20} quantified the worst-case achievable \ac{PAoI} for an augmented reality system over future wireless networks. However, these prior works do not investigate the synergies between blockchains and AoI, especially for a \emph{permissioned blockchain}. Note that a permissioned blockchain can be more suitable for the data management in IoT platforms than a public blockchain as it avoids the use of an intensive consensus protocol (e.g., mining) under a strict membership rule.

The main contribution of this paper is, thus, a novel framework to analyze the data freshness of monitoring networks that use a permissioned blockchain for data management. We call such networks \ac{BeMN}, and it consists of sources, \acp{BS}, and an \ac{HLF} \cite{Git}, which is one of the most widely used permissioned blockchain platform \cite{KuSa:19}. In \ac{BeMN}, sources monitor physical phenomena, and each source transmits its monitored data to an associated \ac{BS}, which is connected to the \ac{HLF} network. Following the consensus process in the \ac{HLF} network, the monitored data is stored in distributed ledgers, and this newly monitored data is used as a status update of the source. To measure the freshness of data in \ac{BeMN}, we analyze the distribution of the \ac{PAoI} and the average \ac{AoI} of \ac{BeMN} by considering both the transmission latency and the consensus latency. We then obtain the \ac{AoI} violation probability, which captures the probability that the \ac{AoI} exceeds a target \ac{AoI}. Furthermore, we  explore the effects of communication and \ac{HLF} parameters on the average \ac{AoI}, the \ac{PAoI} violation probability, and the \ac{AoI} violation probability. Our main contributions can thus be summarized as follows. 

\begin{itemize}
	\item We characterize, in a closed-form,  the statistical characteristics of the \ac{AoI} for \ac{BeMN} including the average \ac{AoI}, the \ac{PAoI} violation probability, and the \ac{AoI} violation
	probability by considering the consensus latency in an \ac{HLF} network as well as the transmission latency. 
	
	\item We explore the impacts of the communication parameter 
	(i.e., target \ac{STP}) and the \ac{HLF} parameters (i.e., block size and timeout)
	on the \ac{AoI} in \ac{BeMN}. In particular, simulation results show that each \ac{BeMN} parameter can have conflicting effects on the \ac{AoI}-related performance. Hence,
	a higher target \ac{STP} or a larger value for the \ac{HLF} parameters  will not always guarantee a lower \ac{AoI}.
		
	\item We implement a real \ac{HLF} platform (v1.3) for measuring the consensus latency. We then validate the accuracy of our modeling and analysis through the implemented \ac{HLF} platform. Experimental results also corroborate the impacts of the \ac{BeMN} parameters on the \ac{AoI}-related performance.
\end{itemize}

The rest of this paper is organized as follows. Section~\ref{sec: Hyperledger Fabric} introduces the overall transaction flow in an \ac{HLF} and the associated parameters. 
Section~\ref{sec: BeMN} describes the \ac{HLF} blockchain-enabled monitoring network and models the consensus latency. 
Section~\ref{sec: AoI} derives the distribution of the \ac{PAoI}, the average \ac{AoI}, and the \ac{AoI} violation probability in \ac{BeMN}.
Section~\ref{Sec: Numerical Results} provides the validation of the analytical results and effects of the parameters of \ac{BeMN} on the \ac{AoI} violation probability.
Finally, conclusions are drawn Section~\ref{sec: Conclusion}.

Notation: An overview of our notation is shown
in Table \ref{Table:notation}.

\begin{table}
	\caption{Summary of our key notations.} \label{Table:notation}
	\begin{center}
		\rowcolors{2}
		{cyan!15!}{}
		\renewcommand{\arraystretch}{1.3}
		\begin{tabular}{c p{6cm} }
			\hline 
			{\bf Notation} & {\hspace{2.5cm}}{\bf Definition}
			\\
			\midrule
			\hline
			$\Blocksize$ & Block size of an \ac{HLF} network \\ \addlinespace
			$\Timeout$ & Timeout of an \ac{HLF} network \\ \addlinespace
			$\alpha$ & Shape parameter of the Gamma distribution\\ \addlinespace
			$\beta$ & Rate parameter of the Gamma distribution\\ \addlinespace 
			$\lambda_s$ 	& Spatial density of sources \\ \addlinespace
			$\lambda$		& Spatial density of \ac{BS}s \\ \addlinespace
			$P$ 		& Transmission power of sources	\\ \addlinespace
			$n$ & Pathloss exponent\\ \addlinespace
			$\bar{\epsilon}$ & Maximum target data rate\\ \addlinespace			
			$\TargetSTP$ & Target STP \\ \addlinespace
			$D$ & Packet size\\ \addlinespace
			$\Tint$ & Inter-generation time of two consecutive packets that successfully arrive at the BS\\ \addlinespace
			$\rho_s$ 	& Generation rate of packets at a source \\ \addlinespace
			$\rho$ & Generation rate of packets that successfully arrive at the \ac{BS}\\ \addlinespace
			$\TotLat[k]$ 		& Total latency of packet $k$ to update status \\ \addlinespace
			$X_k$	& Consensus latency of packet $k$ \\ \addlinespace
			$\TransLat$ 	& Transmission latency \\ \addlinespace
			$\Teff$		& Inter-generation time of two effective packets $k-1$ and $k$ \\ \addlinespace
			$G_k$ 		& Generation instant of effective packet $k$ \\ \addlinespace
			$A_k$ 		& Arrival instant of effective packet $k$ at the BS \\ \addlinespace
			$U_k$		& Update instant of effective packet $k$ \\ \addlinespace
			$\PAoI$ 		& Peak \ac{AoI} of effective packet $k$ \\ \addlinespace
			$\TargetAoI$ & Target \ac{AoI} \\ \addlinespace
			$T_k^v$ 	& Time duration of the \ac{AoI} being larger than  $\TargetAoI$ between $U_k$ and $U_{k-1}$ \\ \addlinespace
			$\Remain$ & Time difference between $\TargetAoI$ and $\TransLat$ \\ \addlinespace
			$\Avgaoi$ 	& Average \ac{AoI} \\ \addlinespace
			$\Aoivio$ 	& \ac{AoI} violation probability \\ \addlinespace
			$\Paoivio$ & \ac{PAoI} violation probability\\ \addlinespace			
			\hline 
			
		\end{tabular}
	\end{center}\vspace{-0.63cm}
\end{table}%

\section{Hyperledger Fabric: Preliminaries} \label{sec: Hyperledger Fabric}
In this subsection, we present the overall structure of \ac{HLF} and the components of the consensus process for a status update. We also introduce the \ac{HLF} parameters which affect the performance of \ac{BeMN}. 
\subsection{\ac{HLF} Transaction Flow} \label{subsec: transaction flow}
\ac{HLF} is a permissioned blockchain platform, in which all changes made by transactions are committed to the distributed ledger \cite{Git}. In \ac{HLF}, peer nodes (or peers) hold their own copies of the distributed ledgers. The ledger is a key-value database, which consists of two parts: a blockchain and a world state. In the blockchain, the immutable records of status changes are stored. Meanwhile, the world state is also a database, in which the current value of the status is paired with its own key and current version number.
Hence, all data in the ledger is identified by its own key and version number. Every ledger update starts with the generation of a transaction.
A \emph{transaction} is executed against the specified function to update the stored data in the ledger with each corresponding key.
Here, we assume that the IoT monitoring system's sources generate a transaction to update their status as done in \cite{LeDo:19} and \cite{LaPe:20}. In \ac{HLF}, participants are all identified. Therefore, the costly consensus method used in public blockchains, known as mining, is not necessary. Instead, the consensus process in \ac{HLF} is composed of three phases: endorsement phase, ordering phase, and validation phase are described next and detailed \cite{E1ChChSrChBiMaCh:18} and \cite{SLMKJL:20}. 

\subsubsection{Endorsement Phase}
All transactions for status updates enter the endorsement phase first. 
During this phase, peers simulate a transaction using their ledgers. The peers make sure that they have identical simulation results, which are called endorsements. These endorsements include the updated status and version number of the ledger in the peer. Then, the transaction with the endorsements is transmitted to the ordering node. Note that, although the transaction simulation results are ready, the status is not updated in this phase. 

\subsubsection{Ordering Phase}
The ordering phase is used not only to arrange transactions in a chronological order but also to generate new blocks with the ordered transactions. The ordering nodes continuously include transactions into a new block until it reaches the pre-defined maximum block size. In order to avoid high latency, a timer is prepared with a pre-defined timeout value. If the timer expires, the nodes instantly export the new block, regardless of the current number of transactions in the block. The newly generated block is then delivered to the peers by the ordering nodes.

\subsubsection{Validation Phase}
Next, in the validation phase, the blocks that are delivered to the peers will be validated and the ledger will be updated. This phase consists of two sequential steps: verification and update.
The peers investigate if each transaction in the block is properly endorsed from the endorsement phase. Then, the peers check whether the version numbers in the endorsements are identical to the ones currently stored in their copied ledgers. This verification is also called the \ac{MVCC} verification. Note that the version number increases each time the corresponding status is updated. Hence, if the two version numbers are different, then this signifies that the status has already been updated by the previous transaction before the current one completes the consensus process. If the version numbers are different, then the transaction becomes invalid and ineffective. Finally, the peers update the world state and the blockchain in the ledger.
\vspace{-0.1cm}
\subsection{\ac{HLF} Parameters}
We are interested in two key \ac{HLF} parameters: the block size and the block-generation timeout, which essentially refer to the \ac{HLF} configurations. As introduced in the ordering phase section, those parameters control how long a transaction will wait in the ordering phase, which affects the consensus latency. Hence, for maintaining data freshness in monitoring networks, these parameters should be properly designed to avoid a large waiting time in the ordering phase. 

\subsubsection{Block Size, \blocksize} 
A block size \blocksize\ limits the maximum number of transactions in a block. A newly arrived transaction needs to wait in the ordering phase until the number of transactions in the block reaches \blocksize. Therefore, a larger \blocksize\ will lead to longer waiting times for the transactions because more time is needed to fill up the block.
\subsubsection{Timeout, \timeout}
A timeout \timeout\ is another way to limit the waiting time of a transaction in the ordering phase. The transaction can wait up to the timeout value \timeout\ for other transactions in the ordering phase. The new block can move to the next phase even if the block is not completely full to avoid long latency. As expected, transactions generally need to wait longer as \timeout\ increases.

Given these preliminaries, in the following section, we present our system model based on Section \ref{sec: Hyperledger Fabric}.

	\begin{figure}
	
	\includegraphics[width=1.02\columnwidth]{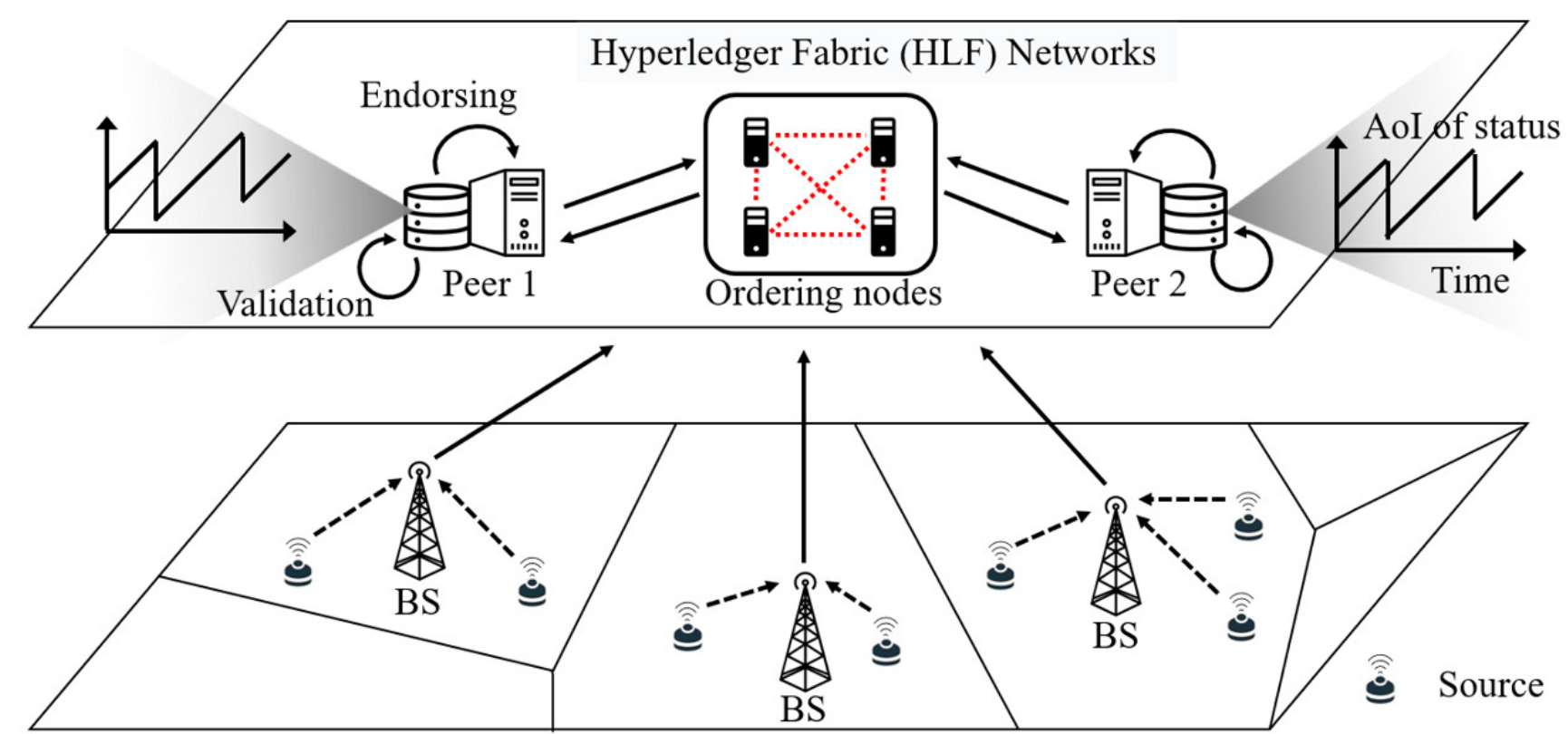}
	\captionsetup {singlelinecheck = false}

	\caption{\ac{BeMN} structure, composed of sources, \acp{BS}, and the \ac{HLF} network.  }
	\label{fig: overall_sysmtem_model}

	\end{figure}

		



\section{Blockchain-Enabled Monitoring Network} \label{sec: BeMN}
\subsection{System Model}
We consider \ac{BeMN} composed of sources, \acp{BS}, and an \ac{HLF} network. In \ac{BeMN}, the sources monitor physical phenomena (e.g., temperature, pollution level) and update the corresponding status stored in the \ac{HLF} as shown in Fig. \ref{fig: overall_sysmtem_model}. We assume that the distribution of the sources follows a \ac{HPPP} $\Phi_s$ with spatial density $\lambda_s$. The source transmits a packet through a wireless uplink channel to the nearest \ac{BS} as done in \cite{ALSJ:09} and \cite{IMTB:14}. We assume that all the sources use the same transmission power $P$. The distribution of the \ac{BS}s also follows an \ac{HPPP} $\Phi_m$ with spatial density $\lambda$. Without loss of generality, we consider a typical source located at the origin and its associated \ac{BS} at $\Location$.
Each channel is allocated to one source only in the cell of any given \ac{BS} to mitigate intra-cell interference between sources belonging to the same cell.

We assume the \ac{BS}s to be connected to the \ac{HLF} network, whereby each status information of the source is stored with its key values. As shown in Fig. \ref{fig: overall_sysmtem_model}, a source monitors a physical phenomenon and generates a packet with newly observed information. In our model, the source generates a packet with an exponentially distributed inter-generation time and rate $\rho_\text{s}$ as done in \cite{ZJGR:03} and \cite{AVAR:17}. The packet is delivered to the \ac{HLF} via a \ac{BS} in the form of a transaction.  Successfully received transactions can update their status information through the consensus process described in Section \ref{subsec: transaction flow}. 
%
%
%

We define the consensus latency as the total time required for the commitment of a transaction, which is the sum of the latencies in each phase. Then, the total latency of packet $k$ will be:
\begin{equation}
\TotLat = X_k + \TransLatk, \label{total_latency}
\end{equation}
where $X_k$ is the consensus latency of packet $k$ and $\TransLatk$ is the transmission latency needed to send a packet from the typical source to its associated \ac{BS}. The consensus latency $\{X_k, k \geq1 \}$ is assumed to be \ac{i.i.d.}.

\subsection{Consensus Latency Modeling}
\label{Consensus Latency Modeling}
We now model the consensus latency in our \ac{BeMN}.
From the empirical results of a constructed \ac{HLF} platform, it is shown in \cite{SLMKJL:21} that the Gamma distribution is reasonable for modeling the consensus latency in the \ac{HLF} platform. Hence, the consensus latency $X_k$ of packet $k$ can be modeled as a Gamma random variable, i.e., $X_k \sim \text{Gamma}(\alpha, \beta)$, whose \ac{PDF} is given by \cite{Thom:58}
\begin{align}
f_{X_k}(x) = \frac{\beta^\alpha}{\Gamma(\alpha)} x^{\alpha - 1} e^{-\beta x}, \label{gamma_pdf}
\end{align}
where $\alpha$ and $\beta$ are the shape and the rate parameters, respectively, and $\Gamma (\cdot)$ is a Gamma function.
To determine the values of $\alpha$ and $\beta$, we use the maximum likelihood estimation \cite{Thom:58}. Specifically, $\alpha$ and $\beta$ can be given by
\begin{align}
&\alpha =
\frac{1}{4 A} 
\left(
1 + \sqrt{1 + \frac{4 A}{3}}
\right), \beta =
\frac{\alpha}{\bar{X}}, \label{average_latency}
\end{align}
where $\bar{X}$ is the mean of consensus latencies, and $A$ is given by 
\vspace{-0.3cm}
\begin{align}
A = \log(\bar{X}) - \summation{i = 1}{N} \log(X_i) /  N
\end{align}
for the sample consensus latency $X_i$ and $N$ number of samples.
We use the \ac{KS} test as done in \cite{ISAFDB:71} and \cite{JSJCNK:05} to quantify the accuracy of the modeling (the accuracy results are given in Section \ref{Sec: Numerical Results}). Note that this modeling of consensus latency is applicable to general \ac{HLF} with version 1.0 or higher.\footnote{\ac{HLF} with version 1.0 or higher includes the \ac{MVCC} verification}

\subsection{Transmission Latency}

Next, we analyze the transmission latency $\TransLatk$ of \ac{BeMN}.
The signal-to-interference-plus-noise ratio (SINR) received by a \ac{BS} located at point $\Location$ under Rayleigh fading channel is 
%
%
\begin{align}
\gamma_{\Location}= \frac{{P h_{\Location} l^{-n}}} {{I_{\Location}} + N_0 W}, \label{SINR_definition}
\end{align}
%
%
%
where  $h_{\Location}$ is the fading channel gain, i.e., $h_{\Location} \sim \exp(1)$, $l$ is the distance between the typical source and the associated \ac{BS}, $n$ is the path loss exponent, $N_0$ is the noise power, and $W$ is the channel bandwidth. In \eqref{SINR_definition}, $I_{x_\text{o}}$ is the inter-cell interference from other sources that use the same uplink frequency band, given by
%
%
\begin{equation}
I_{x_\text{o}} = P \summation{u \in \Psi_u}{} h_{u, x_\text{o}} \|u\|^{-n},
\end{equation}
%
%
where $\Psi_u$ denotes the set of locations of the interfering sources which use the same frequency band with the typical source. We assume that each cell has one source that uses the same uplink frequency band as the typical source. Hence, the density of interfering sources is the same as that of the \ac{BS}s, $\lambda$.  Then, the achievable data rate $R_{\Location}$ is given by
%
\begin{equation}
R_{x_\text{o}} = W \log_2(1 + \gamma_{x_\text{o}}). \label{channel_capacity} 
\end{equation}
%
%
%
We define \ac{STP} $p_c$ as the probability that the achievable data rate $R_{x_\text{o}}$ is greater than or equal to a target rate $\epsilon$, i.e., $p_c = \P \left [R_{x_\text{o}} \geq \epsilon \right ].$ We assume that a packet is transmitted at the maximum target rate $\bar{\epsilon}_{\Location}$ to guarantee $p_c \geq \TargetSTP$, where $\TargetSTP$ is a target \ac{STP}. Hence, $\bar{\epsilon}_{\Location}$ is given by \cite{ChJe:18}
%
%
\begin{align}
\bar{\epsilon}_{\Location} = \max \
\{
\epsilon \ |\ \P[R_{x_\text{o}} \geq \epsilon] \geq \TargetSTP 
\}. \label{Maximum target rate}
\end{align}
Then, $\bar{\epsilon}_{\Location}$ in \eqref{Maximum target rate} can be obtained in the following proposition.
\begin{proposition} \label{prop 1}
The maximum target rate $\bar{\epsilon}_{\Location}$ is the one that satisfies the following equation
\begin{align}
\exp\left( \hspace{-0.5mm} -\frac{l^n}{P} N_0 W \theta({\bar{\epsilon}_{\Location}}) \hspace{-0.5mm} - \frac{2 \lambda \pi^2 l^2 \theta({\bar{\epsilon}_{\Location}})^{2/n}}{n P^{2/n} \sin \left(2\pi/n  \right)}   \right) = \TargetSTP,
\label{transmissionrate fixed r} 
\end{align}
where $\theta({\bar{\epsilon}_{\Location}}) = 2^{\bar{\epsilon}_{\Location}/W} - 1$.
\end{proposition}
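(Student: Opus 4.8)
The plan is to compute the \ac{STP} $p_c(\epsilon) = \P[R_{x_\text{o}} \ge \epsilon]$ in closed form as a function of the target rate $\epsilon$ (with the link distance $l$ held fixed), and then invert the relation $p_c = \TargetSTP$ to characterize $\bar{\epsilon}_{x_\text{o}}$. First I would translate the rate constraint into an SINR constraint: since $R_{x_\text{o}} = W\log_2(1+\gamma_{x_\text{o}})$ is strictly increasing in $\gamma_{x_\text{o}}$, the event $\{R_{x_\text{o}} \ge \epsilon\}$ coincides with $\{\gamma_{x_\text{o}} \ge \theta(\epsilon)\}$, where $\theta(\epsilon) = 2^{\epsilon/W}-1$. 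Because $\theta(\cdot)$ is increasing and $\theta \mapsto \P[\gamma_{x_\text{o}} \ge \theta]$ is decreasing, the map $\epsilon \mapsto p_c(\epsilon)$ is continuous and strictly decreasing from $1$ to $0$. Hence the feasible set $\{\epsilon : p_c(\epsilon) \ge \TargetSTP\}$ is an interval whose supremum $\bar{\epsilon}_{x_\text{o}}$ is attained exactly where $p_c = \TargetSTP$; this reduces the claim to evaluating $\P[\gamma_{x_\text{o}} \ge \theta]$ and setting it equal to $\TargetSTP$.

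The core step conditions on the desired-link fading $h_{x_\text{o}} \sim \exp(1)$. Rewriting the coverage event as $\{h_{x_\text{o}} \ge \tfrac{\theta l^n}{P}(I_{x_\text{o}} + N_0 W)\}$ and integrating over $h_{x_\text{o}}$ first yields
\begin{align}
\P[\gamma_{x_\text{o}} \ge \theta] &= \E\!\left[ \exp\!\left( -\tfrac{\theta l^n}{P}\left(I_{x_\text{o}} + N_0 W\right) \right) \right].
\end{align}
The deterministic noise contribution factors out as $\exp(-\tfrac{\theta l^n}{P} N_0 W)$, which is precisely the first summand in the exponent of the claimed equation, while the remaining expectation over $I_{x_\text{o}}$ is the Laplace transform of the aggregate interference evaluated at $s = \theta l^n / P$.

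To evaluate this Laplace transform I would invoke the probability generating functional of the interfering \ac{HPPP} of density $\lambda$, together with the independence of the exponential fading marks, giving $\exp(-\lambda \int_{\mathbb{R}^2}(1 - \E[e^{-s P h \|u\|^{-n}}])\,du)$. Substituting the marginal Laplace transform of the interferer fading, passing to polar coordinates, and rescaling the radial variable reduces the spatial integral to a constant multiple of $\int_0^\infty \frac{t^{2/n - 1}}{1 + t}\,dt$. I expect \emph{this integral to be the main obstacle}: it converges only when $n > 2$, and its value $\frac{\pi}{\sin(2\pi/n)}$ follows from the Euler reflection formula for the Gamma function (equivalently, the Beta identity $\int_0^\infty t^{a-1}(1+t)^{-1}\,dt = \pi/\sin(\pi a)$ with $a = 2/n$). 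Collecting the remaining constants and recalling $s = \theta l^n/P$ then produces the interference term $\frac{2\lambda \pi^2 l^2 \theta^{2/n}}{n\, P^{2/n} \sin(2\pi/n)}$, i.e.\ the second summand of the claim.

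Finally I would multiply the noise and interference factors to obtain the closed form of $p_c$, substitute $\theta = \theta(\bar{\epsilon}_{x_\text{o}})$, and impose $p_c = \TargetSTP$ using the monotonicity established in the first step, which yields the stated fixed-point equation for $\bar{\epsilon}_{x_\text{o}}$. The delicate points to verify are the interchange of expectation justifying the generating-functional step (standard for an \ac{HPPP} with \ac{i.i.d.} marks), the convergence condition $n > 2$ for the interference integral, and the fact that the supremum defining $\bar{\epsilon}_{x_\text{o}}$ is achieved at equality rather than merely approached, which is guaranteed by the strict monotonicity and continuity of $p_c$.
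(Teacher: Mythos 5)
Your proposal is correct and follows essentially the same route as the paper's Appendix A: translate the rate constraint into an SINR threshold, integrate out the exponential fading to factor the noise term from the Laplace transform of the interference, evaluate that transform for the \ac{HPPP} of interferers, and use the monotonicity of $p_c$ in $\epsilon$ to replace the maximization by an equality. The only difference is that you derive the interference Laplace transform from the probability generating functional and the Beta-integral identity (with the explicit $n>2$ convergence caveat), whereas the paper simply cites the known closed form, so the substance of the argument is the same.
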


\begin{proof}
	See Appendix \eqref{Appendix A}.
\end{proof}

From Proposition 1, we can see that finding a closed-form for $\bar{\epsilon}_{\Location}$ is challenging for a general path loss exponent. However, for $n=4$, $\bar{\epsilon}_{\Location}$ can be derived as:
\begin{align}
\bar{\epsilon}_{\Location} \hspace{-0.5mm} 
=
\hspace{-0.5mm}  W \hspace{-0.5mm}  \log_2 \hspace{-0.9mm}  
\left[ 
\hspace{-0.5mm}  1  \hspace{-1.0mm}  + \hspace{-1.0mm}   \left \{ \hspace{-0.5mm}   \frac{\sqrt{P} (-\pi^2 \lambda  \hspace{-0.5mm}  +  \hspace{-0.5mm}  \sqrt{\pi^4 \lambda^2  \hspace{-0.5mm}  -  \hspace{-0.5mm}  16N_0 W \log \TargetSTP })} {4 N_0 W l^2} \hspace{-0.5mm}   \right\}^2   
\right] \hspace{-0.5mm} . \label{transmissionrate n=4} 
\end{align}

Since the source transmits a packet at the rate $\bar{\epsilon}_{\Location}$, the transmission latency $\TransLat$ can be defined as
\begin{align}
\TransLat \hspace{-0.8mm} &= \hspace{-0.5mm} \frac{{D}}{\bar{\epsilon}_{\Location}} \nonumber \\
&= \hspace{-0.5mm} \frac{{D} \hspace{-0.5mm} \log 2}{W}  \hspace{-0.9mm} \left[ \hspace{-0.5mm}  1  \hspace{-1.0mm}  + \hspace{-1.0mm}   \left\{
\hspace{-0.8mm}
\frac{\sqrt{P} (-\pi^2 \hspace{-0.5mm} \lambda  \hspace{-0.6mm}  +  \hspace{-0.6mm}  \sqrt{\pi^4 \hspace{-0.5mm} \lambda^2  \hspace{-0.6mm}  -  \hspace{-0.6mm}  16N_0 W \hspace{-0.3mm} \log \hspace{-0.2mm} \TargetSTP })}
{4 N_0 W l ^2} \hspace{-0.8mm}   
\right\} ^2 \hspace{-0.5mm}  \right] \hspace{-1mm} , \label{Transmission_latency_closed}
\end{align}
where $D$ [bits] is the packet size. Note that we omit the packet index $k$ in $\TransLat$ since each packet experiences the same transmission latency when the source is at $\Location$. 

\begin{figure}
	\centering
	\begin{center}
		\includegraphics[width=1.03\columnwidth]{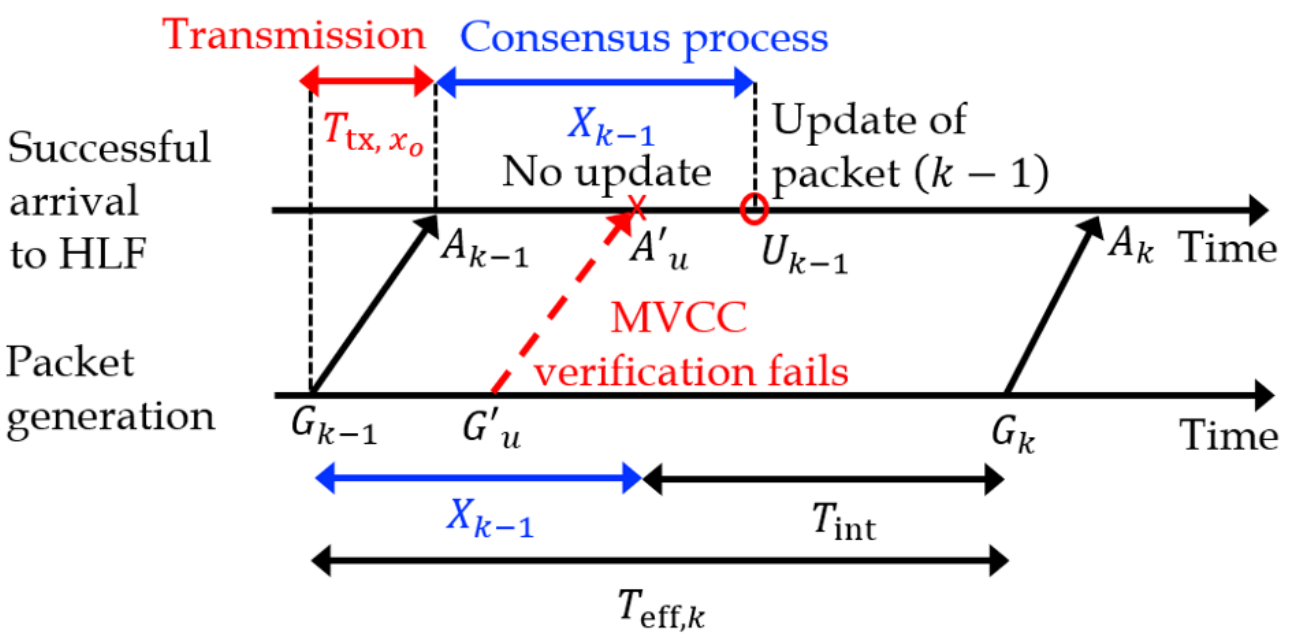}
	\end{center}
	
	\caption{\ac{MVCC} verification failure of the packet generated at $G'_u$.}
	
	\label{fig:MVCC}	
	
\end{figure} 

In \ac{BeMN}, each transmitted packet must go through the consensus process to complete an update. In contrast to conventional databases, in an \ac{HLF}, this consensus latency is not negligible. Hence, we investigate the data freshness using \ac{AoI}-related performance metrics, such as the average \ac{AoI}, the \ac{AoI} violation probability, and the \ac{PAoI} violation probability, by considering both the transmission latency and the consensus latency in the following section.

\section{AoI Analysis of BeMN} \label{sec: AoI}
We now analyze the \ac{AoI} violation probability in our \ac{BeMN}.
As a metric for measuring the data freshness, the \ac{AoI} is defined as the elapsed time since the generation of the latest received packet \cite{SaRoMa:12}. We focus on the \ac{AoI} of the specific status, which is stored with a certain key value in the ledger. As discussed in Section \ref{sec: Hyperledger Fabric}, not every generated packet can make a valid update in \ac{BeMN} because of the \ac{MVCC} verification failure. If the status is updated before the current packet completes its consensus process, this packet becomes invalid and ineffective. We call the packets that make valid updates as \textit{effective packets}. For effective packet $k$, we define $G_k$ as the generation instant at the source, $A_k$ as the arrival instant at the \ac{BS}, and $U_k$ as the update instant at the ledger. As shown in Fig. \ref{fig:MVCC}, packet $k$ can be effective only if its arrival instant $A_{k}$ is after  $U_{k-1}$, which is the update instant of the previous effective packet $(k-1)$. All packets that arrive after $X_{k-1}$ from $G_{k-1}$ become effective packets. Then, the \ac{AoI} at time $t$ can be defined as 
\begin{align}
\Delta(t) = t - \text{max} \{G_k \ | \ U_k \leq t \}.
\end{align}
In Fig. \ref{fig:MVCC}, $G'_u$ and $A'_u$ are, respectively, the generation instant and the arrival instant of invalid packet $u$. We also define the inter-generation time of two consecutive effective packets $\Teff = G_k - G_{k-1}$.

\begin{figure}
	\centering
	\begin{center}
		\includegraphics[width=0.85\columnwidth]{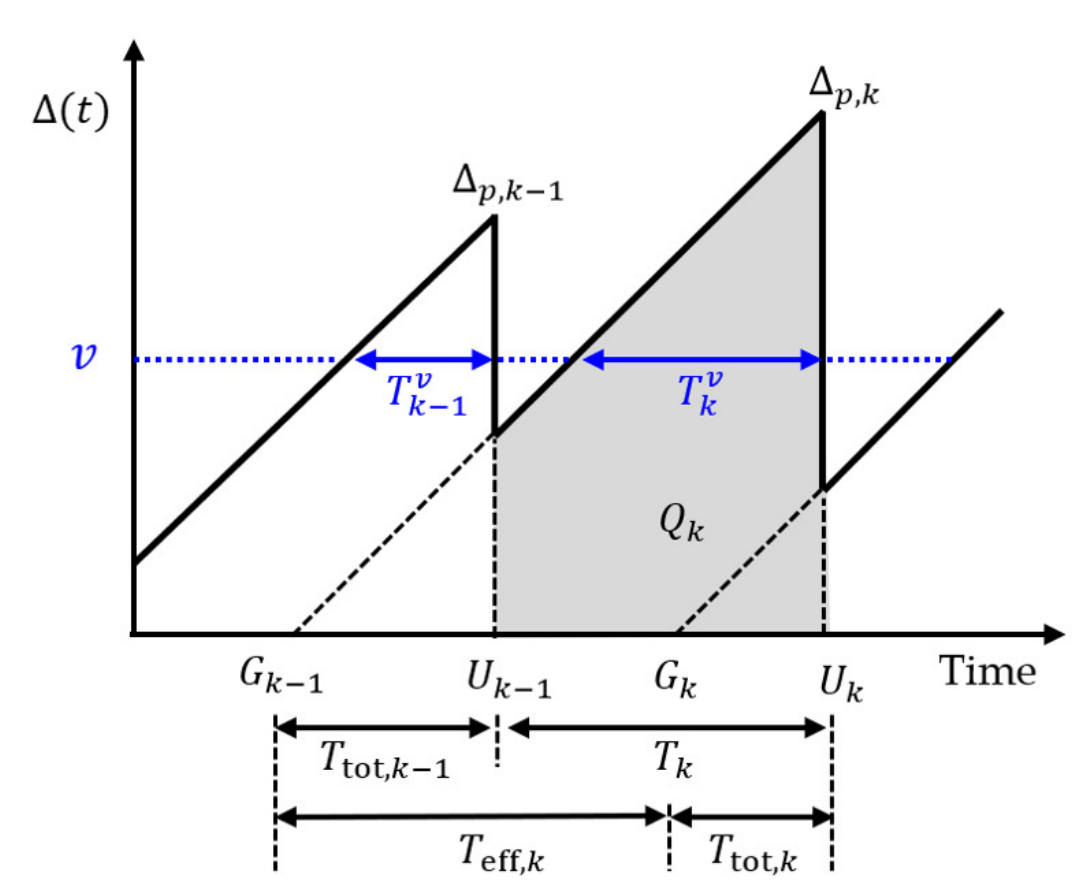}
	\end{center}

	\caption{A sample path of the \ac{AoI} where $G_k$ and $U_k$ are the generation instant and the update instant of effective packet $k$.  }
	
	\label{fig:sample path}
	
\end{figure}

The inter-generation time of two consecutive packets that successfully arrive at the \ac{BS} from the source is denoted by $T_{\text{int}}.$ Since the inter-generation time of packets follows an exponential distribution with rate $\rho_\text{s}$, $\Tint$ will also follow an exponential distribution with rate $\rho = \rho_\text{s} p_c$. We assume independence between $\Tint$ and $\{X_k, \forall k \geq 1\}$ in \ac{BeMN}.\footnote{A dependence exists between $\Tint$ and $\{X_k, \forall k \geq 1\}$ due to the ordering phase in \ac{BeMN}. However, this dependence is negligible as shown in Section \ref{Sec: Numerical Results}-A.} Due to the memoryless property of $T_{\text{int}}$, $\Teff$ can be given as
\begin{align}
\Teff = X_{k-1} + T_{\text{int}}. \label{T_eff}
\end{align}

 The \ac{PAoI} is the \ac{AoI} just before the update instant. As shown in Fig. \ref{fig:sample path}, for effective packet $k$, \ac{PAoI} is   given by
\begin{equation}
\PAoI = \Teff + \TotLat. \label{PAoI_definition}
\end{equation}
From \eqref{T_eff} and \eqref{PAoI_definition}, we derive the average \ac{AoI} \avgaoi{} of \ac{BeMN} in the following lemma.

\begin{lemma}
	In \ac{BeMN}, the average \ac{AoI} \avgaoi \ is given by
	\begin{align}
	& \Avgaoi = \frac{\rho \beta}{2 (\alpha \rho + \beta)}
	\left(
	\frac{2}{\rho^2} + \frac{2\alpha}{\rho \beta} + \frac{\alpha^2 + \alpha}{\beta^2}
	\right) 
	+
	\frac{\alpha}{\beta}
	+
	\TransLat,
	\end{align}
	where $\alpha$ and $\beta$ are defined in \eqref{average_latency}, $\rho = \rho_s p_c$, and $\TransLat$ is given by \eqref{Transmission_latency_closed}.
\end{lemma}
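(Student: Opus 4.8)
The plan is to obtain \Avgaoi{} as the long-run time average of the sawtooth \ac{AoI} process $\Delta(t)$ via the standard renewal--reward (area) argument. Writing $Y_k := U_k - U_{k-1}$ for the inter-update interval and $Q_k := \int_{U_{k-1}}^{U_k}\Delta(t)\,\mathrm{d}t$ for the area accumulated over one cycle, ergodicity gives $\Avgaoi = \E[Q_k]/\E[Y_k]$, so the task reduces to evaluating one cycle's mean area and the mean inter-update interval.

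First I would read the cycle geometry off the sample path in Fig.~\ref{fig:sample path}: just after $U_{k-1}$ the age equals the total latency $\TotLat[k-1]$ of the previous effective packet, it then grows with unit slope, and just before $U_k$ it attains the peak value $\PAoI$ from \eqref{PAoI_definition}. Hence $Q_k$ is a trapezoid of width $Y_k$ with parallel sides $\TotLat[k-1]$ and $\PAoI$. Because the age rises with unit slope we have $\PAoI = \TotLat[k-1] + Y_k$, so the trapezoid collapses to $Q_k = \TotLat[k-1]\,Y_k + \tfrac{1}{2}Y_k^2$. In parallel I would simplify the interval itself: since $Y_k = \Teff + \TotLat - \TotLat[k-1]$ with $\Teff = X_{k-1} + \Tint$ from \eqref{T_eff} and $\TotLat[k-1] = X_{k-1} + \TransLat$, the $X_{k-1}$ terms cancel and $Y_k = X_k + \Tint$.

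The crux is an independence observation that removes the cross term's covariance. The factor $\TotLat[k-1] = X_{k-1} + \TransLat$ depends only on $X_{k-1}$, whereas $Y_k = X_k + \Tint$ depends only on $X_k$ and $\Tint$; since the consensus latencies $\{X_k\}$ are \ac{i.i.d.} and independent of $\Tint$, the two quantities are independent and $\E[\TotLat[k-1]\,Y_k] = \E[\TotLat[k-1]]\,\E[Y_k]$. Consequently $\E[Q_k]/\E[Y_k] = \E[\TotLat] + \E[Y_k^2]/(2\E[Y_k])$, using $\E[\TotLat[k-1]] = \E[\TotLat]$ by the \ac{i.i.d.} assumption. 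It then remains to substitute moments: from the Gamma \ac{PDF} \eqref{gamma_pdf}, $\E[X_k] = \alpha/\beta$ and $\E[X_k^2] = \alpha(\alpha+1)/\beta^2$; from the exponential inter-arrival with rate $\rho = \rho_s p_c$, $\E[\Tint] = 1/\rho$ and $\E[\Tint^2] = 2/\rho^2$. These give $\E[Y_k] = \alpha/\beta + 1/\rho = (\alpha\rho+\beta)/(\rho\beta)$, $\E[Y_k^2] = 2/\rho^2 + 2\alpha/(\rho\beta) + (\alpha^2+\alpha)/\beta^2$, and $\E[\TotLat] = \alpha/\beta + \TransLat$ with $\TransLat$ from \eqref{Transmission_latency_closed}, which reproduce the claimed expression term by term.

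The main obstacle is bookkeeping rather than deep analysis: one must track the packet indices carefully through $Y_k$ so that the $X_{k-1}$ contributions cancel, since this cancellation is exactly what decouples $\TotLat[k-1]$ from $Y_k$ and kills the covariance that would otherwise appear in $\E[Q_k]$. A secondary point I would make explicit is the legitimacy of $\Teff = X_{k-1} + \Tint$ together with the independence of $\Tint$ and $\{X_k\}$ (the memoryless property of the exponential inter-generation time and the negligible ordering-phase dependence noted in the footnote), as the entire closed form rests on that cross term factoring cleanly.
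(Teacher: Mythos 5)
Your proposal is correct and follows essentially the same route as the paper's proof: the renewal--reward ratio $\E[Q_k]/\E[T_k]$ over the sawtooth sample path, factoring the cross term via the independence of $\{X_k\}$ and $\Tint$, and substituting the Gamma and exponential moments. Your rectangle-plus-triangle split $Q_k = \TotLat[k-1]T_k + \tfrac{1}{2}T_k^2$ is algebraically identical to the paper's $Q_k = \tfrac{1}{2}(\PAoI^2 - \TotLat[k-1]^2)$ expansion, so no substantive difference remains.
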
  
\begin{proof} 
	Let $Q_k$ be the area of the trapezoid between $U_k$ and $U_{k-1}$ as in Fig. \ref{fig:sample path}. From \cite{YsEbRdCkNs:17}, \avgaoi{} can be given by
	\begin{align}
	\Avgaoi & =
	\limit{T}  \frac{1}{T} \integral{0}{T} \Delta(t) \mathrm{d}{t} =
	\limit{T} \frac{1}{T} \summation{k=1}{N(T)} Q_k \nonumber \\
	& =
	\limit{T} \frac{N(T)}{T} \frac{1}{N(T)}  \summation{k=1}{N(T)} Q_k \nonumber \\ 
	&\overset{(a)}{=}
	\limit{T} \frac{\summation{k=1}{N(T)} Q_k / N(T)}{\summation{k=1}{N(T)} (U_k - U_{k-1}) /N(T) } \nonumber \\
	&\overset{(b)}{=}
	\frac{\E[Q_k] } {\E[T_k]}, \label{average_aoi_fraction}
	\end{align}
	where $N(T)$ is the number of updates until time $T$, defined as  $N(T) = \text{max} \{ k \ | \ U_k \leq T  \}.$ In \eqref{average_aoi_fraction}, \emph{(a)} follows from the fact that $T$ can be presented in an infinite summation of the update intervals $U_k - U_{k-1}$ as $T$ goes to infinity, and \emph{(b)} follows from the ergodicity of the sample path.

	In \eqref{average_aoi_fraction}, we have:
	$Q_k = \frac{1}{2} (\ac{PAoI}_k^2 - \TotLat[k-1]^2)$. Therefore, using \eqref{PAoI_definition}, $\E[Q_k]$ can be obtained as 
	\begin{align}
	\E[Q_k] &=\frac{1}{2} \E[\Teff^2 + 2 \Teff \TotLat].
	\end{align}
	Using $\Teff$ in \eqref{T_eff}, $\E[Q_k]$ can be represented as
	\begin{align}
	\E[Q_k] &= \frac{1}{2} \E[(\Tint \hspace{-0.5mm} + \hspace{-0.5mm} X_{k-1} )^2] \hspace{-0.5mm} + \hspace{-0.5mm} \E[(\Tint \hspace{-0.5mm} + \hspace{-0.5mm} X_{k-1}) (X_k \hspace{-0.5mm} + \hspace{-0.5mm} \TransLat)]. \label{E[Q_k] before}
	\end{align}
	Since $\Tint \sim \text{Exp}(\rho)$ and $X_k \sim \text{Gamma}(\alpha, \beta)$, \eqref{E[Q_k] before} can be given by
	\begin{align}
	\E[Q_k]	&= 
	\frac{2}{\rho^2} + \frac{2 \alpha}{\rho \beta} + \frac{\alpha^2 + \alpha}{\beta^2} + \left(
	\frac{1}{\rho} + \frac{\alpha}{\beta}
	\right)
	\left(
	\frac{\alpha}{\beta} + \TransLat
	\right). \label{E[Q_k]}
	\end{align}
	%
	%
	
	In \eqref{average_aoi_fraction}, as shown in Fig. \ref{fig:sample path}, $T_k$ represents the interval of the update instants $U_{k-1}$ and $U_k$, which can be given by 
	\begin{equation}
	T_k = \Teff + \TotLat - \TotLat[k-1] =X_k + \Tint. \label{E[T_k]}
	\end{equation}
	Hence, we have $\displaystyle \E[T_k] = \frac{1}{\rho} + \frac{\alpha}{\beta}.$ The average \ac{AoI} \avgaoi{} can be then presented by substituting  \eqref{E[Q_k]} and \eqref{E[T_k]} into \eqref{average_aoi_fraction}.
\end{proof}

From Lemma 1, we can obtain \avgaoi{} in \ac{BeMN}, which captures the overall freshness of status information stored in ledgers. 
To see whether a certain level of freshness is guaranteed  in \ac{BeMN}, we can evaluate the \emph{\ac{AoI} violation probability}, which is the probability that the \ac{AoI} exceeds a target \ac{AoI} $v$. The AoI violation probability $\Aoivio$ is given by \cite{JaHuJa:19}
\begin{align}
\Aoivio =  \P[\Delta(t) \geq \TargetAoI] = \frac{\E[T^\TargetAoI_k]}  { \E[T_{k}] }, \label{ratio_of_E}
\end{align}
where $T^\TargetAoI_k$ is the time duration during which the \ac{AoI} is larger than $\TargetAoI$ between the update instants $U_{k-1}$ and $U_k$. Then, $T^\TargetAoI_k$ is given by 
\begin{align}
T^\TargetAoI_k &= \min{ \left\{ \left(\PAoI  -   \TargetAoI \right)^+,   T_k  \right\} }  \nonumber   \\
&= \min{ \left\{ \left( X_{k-1} \hspace{-0.5mm} + \hspace{-0.5mm} X_{k} \hspace{-0.5mm} +  \hspace{-0.5mm}\Tint  \hspace{-0.5mm} + \hspace{-0.5mm} \TransLat  \hspace{-0.5mm}-  \hspace{-0.5mm} \TargetAoI \right)^+,   X_{k}  \hspace{-0.5mm}+ \hspace{-0.5mm} \Tint         \right\}          }, \label{min_T^v_k}
\end{align}
where $(\cdot)^+ = \max {(0, \ \cdot) }$. We now obtain \aoivio{} of \ac{BeMN} in the following theorem.

\begin{figure*}[!t]
	\vspace{-0.2cm}
	\begin{align}
	\Aoivio &= \frac{\rho \beta^{2\alpha+1}}{\left( \beta \hspace{-0.5mm} + \hspace{-0.5mm} \rho \alpha \right) \Gamma(\alpha)^2} \summation{n=0}{\infty} \frac{\left( \rho \hspace{-0.5mm}  - \hspace{-0.5mm} \beta  \right)^n}{n! (\alpha \hspace{-0.5mm} + \hspace{-0.5mm} n )   \rho^{\alpha +n+1} }
	\Bigg[
	\frac{\Gamma(\alpha \hspace{-0.5mm} + \hspace{-0.5mm} n \hspace{-0.5mm} + \hspace{-0.5mm} 1)}{\beta^\alpha} \gamma\left(\alpha, \beta \Remain   \right) - \summation{k=0}{\infty} \frac{(-1)^k (\rho \Remain)^{\alpha +  n   +  k   +  1}}{k! \left( \alpha  \hspace{-0.5mm} + \hspace{-0.5mm} n \hspace{-0.5mm} + \hspace{-0.5mm}  k \hspace{-0.5mm} +  \hspace{-0.5mm} 1 \right) } B\left( \alpha  \hspace{-0.5mm} + \hspace{-0.5mm} n \hspace{-0.5mm} +  \hspace{-0.5mm}  k \hspace{-0.5mm}  + \hspace{-0.5mm} 2 , \alpha  \right)
	\nonumber \\ 
	& \quad
	\times  \Remain^{\alpha}   {_1}F_1 \left(\alpha; 2\alpha  \hspace{-0.5mm} + \hspace{-0.5mm} n \hspace{-0.5mm} + \hspace{-0.5mm} k  \hspace{-0.5mm} + \hspace{-0.5mm}  2; -\beta \Remain \right)
	\Bigg]
	\hspace{-0.5mm} + \hspace{-0.5mm}  \frac{\rho}{(\beta \hspace{-0.5mm}  + \hspace{-0.5mm} \rho \alpha) \Gamma(\alpha)}  \left\{ \alpha \gamma(\alpha, \beta \Remain)  \hspace{-0.5mm} -  \hspace{-0.5mm}   \summation{n=0}{\infty} \frac{ (-\beta \Remain)^{2\alpha+n+1}} {n! \Gamma(\alpha) (\alpha  \hspace{-0.5mm} + \hspace{-0.5mm} n \hspace{-0.5mm}  + \hspace{-0.5mm} 1)} B(\alpha \hspace{-0.5mm} + \hspace{-0.5mm} n \hspace{-0.5mm} + \hspace{-0.5mm} 2, \alpha) \right. \nonumber \\
	& \quad \left. \times {_1}F_1 (\alpha; 2\alpha \hspace{-0.5mm} + \hspace{-0.5mm} n \hspace{-0.5mm} + \hspace{-0.5mm} 2; -\beta \Remain) - (\beta \Remain)^{\alpha+1}  B(\alpha, 2) \Remain^{\alpha} {_1}F_1 (\alpha; \alpha  \hspace{-0.5mm} + \hspace{-0.5mm} 2; -\beta \Remain) \right. \nonumber \\
	&\quad \left. +  \summation{n=0}{\infty} \frac{ (-\beta \Remain)^{2\alpha+n+1}}{  n! \Gamma(\alpha) (\alpha+n)} B(\alpha , \alpha \hspace{-0.5mm}  +\hspace{-0.5mm} n \hspace{-0.5mm} + \hspace{-0.5mm} 2) {_1}F_1(\alpha; 2\alpha \hspace{-0.5mm} + \hspace{-0.5mm} n \hspace{-0.5mm} + \hspace{-0.5mm} 2; \beta \Remain) \right\} + \frac{\Gamma(\alpha, \beta \Remain)}{\Gamma(\alpha)}. \label{AoI_violation_Prob}
	\end{align}
	\centering \rule[0pt]{18cm}{0.3pt}
\end{figure*}

\begin{theorem}
	In \ac{BeMN}, the \ac{AoI} violation probability \aoivio{} is given by \eqref{AoI_violation_Prob} (see the top of the next page), where $\Remain =(\TargetAoI-\TransLat)^+$.		\label{theorm 1}	
\end{theorem}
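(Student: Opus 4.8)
The plan is to evaluate the ratio $\Aoivio = \E[T^{\TargetAoI}_k]/\E[T_k]$ from \eqref{ratio_of_E}. The denominator $\E[T_k] = 1/\rho + \alpha/\beta = (\beta+\rho\alpha)/(\rho\beta)$ is already in hand from \eqref{E[T_k]}, so $1/\E[T_k] = \rho\beta/(\beta+\rho\alpha)$ supplies the common prefactor seen throughout \eqref{AoI_violation_Prob}; the entire task therefore reduces to computing the numerator $\E[T^{\TargetAoI}_k]$ with $T^{\TargetAoI}_k = \min\{(X_{k-1}+M-\Remain)^+,\,M\}$, where I abbreviate $M := X_k + \Tint = T_k$ and use $\Remain = (\TargetAoI-\TransLat)^+$ from \eqref{min_T^v_k}. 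I would exploit that $M$ and $X_{k-1}$ are independent, $X_{k-1},X_k\sim\mathrm{Gamma}(\alpha,\beta)$, and $\Tint\sim\mathrm{Exp}(\rho)$.

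First I would resolve the $\min$ by comparing $X_{k-1}$ with $\Remain$. If $X_{k-1}>\Remain$ then $(X_{k-1}+M-\Remain)^+ > M$, so the $\min$ is $M$; if $X_{k-1}\le\Remain$ then $(X_{k-1}+M-\Remain)^+\le M$, so the $\min$ is $(M-(\Remain-X_{k-1}))^+$. Hence
\begin{align}
\E[T^{\TargetAoI}_k] = \E[M]\,\P[X_{k-1}>\Remain] + \integral{0}{\Remain}\E\big[(M-(\Remain-y))^+\big]\,f_{X_{k-1}}(y)\,\mathrm{d}y. \nonumber
\end{align}
The first term is immediate: $\E[M]=\E[T_k]$ and $\P[X_{k-1}>\Remain]=\Gamma(\alpha,\beta\Remain)/\Gamma(\alpha)$, so after division by $\E[T_k]$ it contributes exactly the standalone summand $\Gamma(\alpha,\beta\Remain)/\Gamma(\alpha)$ in \eqref{AoI_violation_Prob}. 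All remaining effort concerns the truncated second term.

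For that term I would first obtain the law of $M=X_k+\Tint$ by convolution, $f_M(m)=\frac{\rho\beta^\alpha}{\Gamma(\alpha)}e^{-\rho m}\integral{0}{m} x^{\alpha-1}e^{(\rho-\beta)x}\,\mathrm{d}x$, then expand $e^{(\rho-\beta)x}=\summation{n=0}{\infty}\frac{(\rho-\beta)^n}{n!}x^n$ and integrate termwise; this produces the outer series $\summation{n=0}{\infty}\frac{(\rho-\beta)^n}{n!(\alpha+n)}$ heading \eqref{AoI_violation_Prob}. Writing $\E[(M-c)^+]=\integral{c}{\infty}\overline F_M(m)\,\mathrm{d}m$ with $c=\Remain-y$, the inner integral throws up lower incomplete gamma factors $\gamma(\alpha+n+1,\rho\,c)$; inserting the series $\gamma(a,z)=\summation{k=0}{\infty}\frac{(-1)^k z^{a+k}}{k!(a+k)}$ with $a=\alpha+n+1$ accounts for the second ($k$-)series and for the exponent $\alpha+n+k+1$ in \eqref{AoI_violation_Prob}.

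The main obstacle is the concluding integration over $y\in[0,\Remain]$ against $f_{X_{k-1}}(y)\propto y^{\alpha-1}e^{-\beta y}$: after the preceding steps the integrand carries factors $y^{\alpha-1}(\Remain-y)^{p}e^{-\beta y}$ (and, once each complete gamma is split as $\Gamma(\cdot)=\gamma(\cdot)+\Gamma(\cdot,\beta\Remain)$, some $e^{+\beta y}$ pieces as well). Each such finite-range integral I would evaluate through the Kummer identity $\integral{0}{1} t^{a-1}(1-t)^{b-1}e^{zt}\,\mathrm{d}t = B(a,b)\,{}_1F_1(a;a+b;z)$ after the substitution $y=\Remain t$, which is precisely what manufactures the Beta functions $B(\cdot,\cdot)$ and confluent hypergeometric functions ${}_1F_1(\cdot;\cdot;\pm\beta\Remain)$ appearing in \eqref{AoI_violation_Prob}. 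The delicate bookkeeping is threefold: (i) justifying the interchange of the two summations with the integrals by absolute convergence and Tonelli, using that $\Remain$ is finite; (ii) splitting each complete gamma into its $\gamma$ and $\Gamma(\cdot,\beta\Remain)$ parts so that the finite upper limit $\Remain$ is respected; and (iii) collecting the resulting Beta--${}_1F_1$ terms under the two prefactors $\frac{\rho\beta^{2\alpha+1}}{(\beta+\rho\alpha)\Gamma(\alpha)^2}$ and $\frac{\rho}{(\beta+\rho\alpha)\Gamma(\alpha)}$, which arise according to how many Gamma normalizations enter a given term. As a consistency check I would verify the degenerate case $\TargetAoI\le\TransLat$: then $\Remain=0$, every truncated term vanishes, and the surviving pieces give $\Aoivio=1$, as expected.
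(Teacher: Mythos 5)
Your proposal follows essentially the same route as the paper's proof: you split the $\min$ in \eqref{min_T^v_k} by comparing $X_{k-1}$ with $\Remain$ (the paper's $E_1$/$E_2$ decomposition in \eqref{T^v_k}), recover the standalone term $\Gamma(\alpha,\beta\Remain)/\Gamma(\alpha)$ from the $X_{k-1}>\Remain$ branch, and generate the double series and the $B(\cdot,\cdot)\,{}_1F_1$ factors from the same series expansion of the incomplete gamma functions together with the Kummer-type integral that the paper invokes as \cite[Equation 3.383-1]{ToI}. The bookkeeping you flag (splitting $\Gamma(\cdot)=\gamma(\cdot)+\Gamma(\cdot,\beta\Remain)$ and tracking the two prefactors) is exactly what the paper carries out in \eqref{1_3}--\eqref{3_3}, so the plan is correct and matches the published argument.
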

\begin{proof}
	See Appendix \ref{app:theorem 1}.
\end{proof}

From Theorem~1, we can identify how the tail of the \ac{AoI} distribution in \ac{BeMN} is formed, which is not captured by \avgaoi. This result can be used to prevent the status update \ac{AoI} from being too stale. In Section \ref{Sec: Numerical Results}-A, we further show that the optimal value of the \ac{BeMN} parameters to minimize \aoivio\ can be different from \avgaoi. 

We also obtain the \ac{AoI} violation probability for the case of integer $\alpha$ in the following corollary.

\begin{cor}	
When the shape parameter $\alpha$ of the consensus latency distribution is an integer, 
the \ac{AoI} violation probability \aoivio\ is given by
	\begin{align}
	&\Aoivio =
\frac{\beta^{2 \alpha + 1} e^{-\rho \Remain} \gamma
	\left( \hspace{-0.5mm}
	\alpha, \Remain \left( \beta - \rho \right) \hspace{-0.5mm}
	\right) \hspace{-0.5mm}  }
{\left(\alpha \rho \hspace{-0.5mm}  + \hspace{-0.5mm}  \beta   \right) \left( \beta - \rho  \right)^{2 \alpha} \Gamma(\alpha) } 
+ \frac{ \Gamma(\alpha, \beta \Remain)}{\Gamma(\alpha)}
\nonumber \\
& \quad \quad +  \hspace{-0.5mm}
\frac{\rho}{\alpha \rho \hspace{-0.5mm} + \hspace{-0.5mm} \beta} \hspace{-1.2mm}
\summation{m=0}{\alpha-1}  \hspace{-0.5mm}
\summation{k=0}{m} \hspace{-0.5mm}
\frac{(\beta \Remain)^{\alpha + k} }{\Gamma(\alpha \hspace{-0.7mm} + \hspace{-0.7mm} k + \hspace{-0.7mm}  1)}  
e^{-\beta \Remain} \hspace{-0.5mm}
\left\{ \hspace{-0.5mm}
1 \hspace{-0.5mm} - \hspace{-0.5mm} (\rho \hspace{-0.5mm} - \hspace{-0.5mm} \beta )^{m \hspace{-0.2mm} - \hspace{-0.2mm} \alpha}
\hspace{-0.2mm} \right\} \hspace{-0.5mm} . \label{Intger_AoI_vio}
	\end{align}
	\label{cor 1}
\end{cor}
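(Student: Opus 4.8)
The plan is to start from the ratio representation $\Aoivio = \E[T_k^{\TargetAoI}] / \E[T_k]$ used to define the violation probability, and to note that the denominator is already known: from the proof of Lemma~1 we have $\E[T_k] = 1/\rho + \alpha/\beta = (\beta + \alpha\rho)/(\rho\beta)$, so no new work is needed there. The entire task is therefore to evaluate the numerator $\E[T_k^{\TargetAoI}]$ when $\alpha$ is a positive integer. The central observation is that in this case $X_{k-1}$ and $X_k$ become Erlang variables, so their densities, their complementary CDFs, and the convolution of $X_k$ with the exponential $\Tint$ all admit \emph{finite} closed forms. This is exactly why the infinite series and the ${}_1F_1$ functions appearing in Theorem~\ref{theorm 1} collapse into the finite double sum of the corollary: the power-series expansions forced by a general $\alpha$ are no longer necessary.

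First I would decompose $T_k^{\TargetAoI}$ through a case analysis on $X_{k-1}$ relative to $\Remain = (\TargetAoI - \TransLat)^+$. Writing $S = X_{k-1} + X_k + \Tint$ and using $X_k + \Tint = S - X_{k-1} \ge 0$, the definition $T_k^{\TargetAoI} = \min\{(S - \Remain)^+,\, X_k + \Tint\}$ splits into three regimes: $S \le \Remain$ gives $0$; $X_{k-1} \le \Remain < S$ gives $S - \Remain$; and $X_{k-1} > \Remain$ gives $X_k + \Tint$. Hence $\E[T_k^{\TargetAoI}] = \mathrm{I} + \mathrm{II}$, where $\mathrm{I} = \E[(X_{k-1} + X_k + \Tint - \Remain)\,\mathbf{1}(X_{k-1} \le \Remain,\, S > \Remain)]$ and $\mathrm{II} = \E[(X_k + \Tint)\,\mathbf{1}(X_{k-1} > \Remain)]$. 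Term $\mathrm{II}$ is immediate: since $X_{k-1}$ is independent of $X_k + \Tint$, we get $\mathrm{II} = (\alpha/\beta + 1/\rho)\,\P[X_{k-1} > \Remain]$, and dividing by $\E[T_k] = \alpha/\beta + 1/\rho$ leaves exactly $\P[X_{k-1} > \Remain] = \Gamma(\alpha, \beta\Remain)/\Gamma(\alpha)$, which is the second term of \eqref{Intger_AoI_vio}.

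The real computation is term $\mathrm{I}$. Conditioning on $X_{k-1} = y \le \Remain$ and substituting $u = \Remain - y$ turns it into $\mathrm{I} = \int_0^{\Remain} f_{X_{k-1}}(\Remain - u)\,\E[(V - u)^+]\,\mathrm{d}u$ with $V = X_k + \Tint$. Here I would (i) insert the Erlang density for $f_{X_{k-1}}$, and (ii) obtain $\P[V > t]$ in closed form by convolving the Erlang$(\alpha,\beta)$ with $\Exp(\rho)$; for integer $\alpha$ and $\beta \neq \rho$ this is a finite mixture of terms $t^j e^{-\beta t}$ and $e^{-\rho t}$ (equivalently, by partial fractions of its moment generating function). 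Then $\E[(V-u)^+] = \int_u^\infty \P[V > t]\,\mathrm{d}t$ is again a finite combination of $u^j e^{-\beta u}$ and $e^{-\rho u}$, and the outer integral over $u \in [0, \Remain]$ produces only lower incomplete gamma functions. In particular, the exponential--Erlang cross term reduces to an integral of the form $\int_0^{\Remain} s^{\alpha-1} e^{-(\beta-\rho)s}\,\mathrm{d}s = (\beta-\rho)^{-\alpha}\gamma(\alpha, \Remain(\beta-\rho))$; combined with the $\beta^\alpha/(\beta-\rho)^\alpha$ residue from $\P[V>t]$ and the $e^{-\rho\Remain}$ carried by the shift, this assembles the first term of \eqref{Intger_AoI_vio} with its $(\beta-\rho)^{-2\alpha}$ and $\beta^{2\alpha+1}$ factors. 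The pure-Erlang contributions reassemble into the finite double sum $\sum_{m=0}^{\alpha-1}\sum_{k=0}^{m}$ with the bracket $\{1 - (\rho-\beta)^{m-\alpha}\}$, and the prefactors $\rho/(\alpha\rho+\beta)$ follow after dividing by $\E[T_k]$.

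The main obstacle is the bookkeeping inside term $\mathrm{I}$: forming the Erlang-plus-exponential convolution, propagating its finite-sum form through two nested integrations while tracking the positive-part threshold, and then verifying that the resulting collection of finite sums condenses \emph{exactly} into the compact double sum and the single $\gamma(\alpha, \Remain(\beta-\rho))$ term with the stated prefactors. Throughout I would assume $\beta \neq \rho$ (the generic case), recovering $\beta = \rho$ by a limiting argument. An alternative is to specialize Theorem~\ref{theorm 1} directly, resumming its infinite series via the integer-$\alpha$ identities $\gamma(\alpha,x) = \Gamma(\alpha)\bigl(1 - e^{-x}\sum_{j=0}^{\alpha-1} x^j / j!\bigr)$ together with the termination of $\,{}_1F_1(\alpha;\,\cdot\,;\,\cdot\,)$ at integer first argument; however, I expect the direct Erlang rederivation to be cleaner and to make the origin of the finite double sum transparent.
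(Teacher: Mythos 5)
Your proposal is correct and follows essentially the same route as the paper: the paper also evaluates $\E[T_k^{\TargetAoI}]/\E[T_k]$ by splitting at $X_{k-1}\lessgtr \Remain$ (your terms $\mathrm{I}$ and $\mathrm{II}$, the paper's $E_1$ and $E_2$), obtains the $\Gamma(\alpha,\beta\Remain)/\Gamma(\alpha)$ term from independence exactly as you do, and handles the remaining part by exploiting the finite-sum form of the incomplete gamma functions for integer $\alpha$ --- which is just your Erlang/partial-fraction computation in different notation, the paper reusing the intermediate expression for $\P[\Tint\ge a+\Remain-x-X_k]$ from the proof of Theorem~\ref{theorm 1} rather than re-deriving the Erlang-plus-exponential convolution from scratch. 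Your explicit caveat that $\beta\neq\rho$ is needed (with a limiting argument otherwise) is a point the paper leaves implicit.
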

\begin{proof}
		Note that $\gamma(\alpha, x)$ and $\Gamma(\alpha, x)$ can be represented as finite series when the shape parameter $\alpha$ is a positive integer value as shown in \cite[Equation. 3.351-2]{ToI}. In other words, we have:
	\begin{align}
	\gamma(\alpha, x) \hspace{-0.3mm} = \hspace{-0.3mm} \Gamma(\alpha) \hspace{-0.7mm}
	\left( \hspace{-0.7mm} 
	1 \hspace{-0.5mm} - \hspace{-0.5mm} \summation{n=0}{\alpha - 1} \frac{x^n e^{-x} }{n!} \hspace{-0.7mm}
	\right) \hspace{-0.7mm}
	, \Gamma(\alpha, x) = \Gamma(\alpha) \hspace{-0.7mm}
	\summation{n=0}{\alpha - 1} \hspace{-0.5mm}
	\frac{x^n e^{-x} }{n!}. 	\label{3.351-1, 2}
	\end{align}
	
	Using \eqref{3.351-1, 2}, $\P[\Tint \geq a + \Remain - x - X_k]$ in \eqref{Just Prob} can be obtained as follows:
	\begin{align}
	&\P \left[ \Tint  \geq a + \Remain - x - X_k \right] \nonumber \\
	&= \frac{\beta^{\alpha} 	e^{-\rho \left(a + \Remain \hspace{-0.5mm} - x  \right)}  }  {\left( \beta  	\hspace{-0.5mm} -  	\hspace{-0.5mm} \rho  \right)^{\alpha}}
	\hspace{-0.5mm}
	\left\{
	\hspace{-0.5mm}
	1  
	\hspace{-0.9mm}	
	- 
	\hspace{-0.9mm}	
	\summation{m=0} {\alpha-1} \hspace{-0.7mm} \frac{(\beta 	\hspace{-0.5mm} - 	\hspace{-0.5mm} \rho)^m}{m!}
	\hspace{-0.5mm}	
	\left(
	a 	\hspace{-0.5mm} + 	\hspace{-0.5mm} \Remain 	\hspace{-0.8mm} - 	\hspace{-0.5mm} x  
	\right)^m  \hspace{-0.5mm}
	e^{ -\frac{\beta}{\rho} } \hspace{-0.5mm}
	\right\}
	\nonumber  \\
	& \quad	+
	\summation{m=0}{\alpha-1} \frac{\beta^m}{m!} \left(  a + \Remain - x  \right)^m e^{ -\beta \left( a + \Remain - x  \right) }.  \label{Integer Just Prob}
	\end{align}
	When computing $\E[T_k^\TargetAoI]$ in \eqref{T^v_k}, for obtaining $E_1(x)$, the integral of \eqref{Integer Just Prob}, we can first derive the following: 
	\begin{align}
	& \integral{0}{\infty} 
	\P \left[
	\Tint  \geq a + \Remain - x - X_k
	\right] \mathrm{d}a \nonumber \\
	& = 
	\integral{0}{\infty}  \hspace{-0.5mm}
	\left[ 
	\frac{\beta^\alpha e^{-\rho \left(a + \Remain - x  \right)}}  {\left( \beta - \rho  \right)^\alpha} \hspace{-0.5mm}
	- \hspace{-0.5mm}
	\left\{
	\summation{m=0}{\alpha-1}
	\left(
	\frac{\beta}{(\beta - \rho)}
	\right)^\alpha
	\frac{(\beta - \rho)^m}{m!} 
	\right. \right.
	\nonumber \\
	&\left. \left.
	\quad \quad \quad \quad - \frac{\beta^m}{m!}
	\right\}
	\times
	\left(  a + \Remain - x  \right)^m e^{ -\beta \left( a + \Remain - x  \right) }
	\right] \mathrm{d}a \nonumber \\
	&=
	\frac{\beta^\alpha e^{-\rho \left(\Remain - x  \right)}}  {\rho \left( \beta - \rho  \right)^\alpha} \hspace{-0.5mm}
	-              \hspace{-0.5mm} 
	\summation{m=0}{\alpha-1} \hspace{-0.5mm}
	\frac{\Gamma  
		\left(
		m+1, \beta \left( \Remain - x \right)
		\right)
	} {\beta \hspace{0.3mm} m!}  \nonumber \\
	&\quad \quad \quad \quad \quad \quad \quad \quad \quad \times 
	\left\{
	1 - (1 - \rho / \beta)^{m - \alpha}  
	\right\},  \label{Integer substitution}
	\end{align}
	where the last equation is obtained by substituting $k$ for $a + \Remain - x$ and \eqref{definition_of_incomplete}. Then, using \eqref{Integer substitution}, $E_1(x)$ can be obtained by 
	\begin{align}
	&\integral{0}{\Remain} \hspace{-2.0mm}
	\integral{0}{\infty} \hspace{-1.5mm} 
	\P
	\left[
	\Tint  \geq a + \Remain - x - X_k 
	\right] f_{X_{k-1}}(x) \ \mathrm{d}a \mathrm{d}x \nonumber \\
	& \overset{(a)}{=}
	\frac{e^{-\rho \Remain}  \beta ^ {2 \alpha } \gamma 
		\left( 
		\alpha, \Remain \left( \beta - \rho \right)
		\right)} 
	{\rho \hspace{0.3mm} \Gamma(\alpha) \left( \beta - \rho  \right)^{2 \alpha}} \hspace{-0.5mm}
	+ \hspace{-0.5mm}
	\summation{m=0}{\alpha-1} \hspace{-0.5mm}
	\summation{k=0}{m} \hspace{-0.5mm}
	\frac{\beta^{\alpha +k - 1} \Remain^{\alpha +k}}{\Gamma(\alpha +k +1)}
	\nonumber \\
	& \quad \quad \times
	e^{-\beta \Remain} 
	\left\{
	1 - (1 - \rho/\beta)^{m - \alpha}
	\right\}, \label{Integer T^v_k first}
	\end{align}
	where $(a)$ follows from the representation of $\gamma(\alpha, x)$ in \eqref{3.351-1, 2} and the fact that a beta function $B(a,b)$ is the same as $\Gamma(a) \Gamma(b) / \Gamma(a+b)$ \cite[Equation 8.384-1]{ToI}. 
	From \eqref{ratio_of_E}, \eqref{T^v_k}, \eqref{T_ak_2}, and \eqref{Integer T^v_k first}, \aoivio{} with integer $\alpha$ can be obtained as \eqref{Intger_AoI_vio}.
\end{proof}

\begin{remark}
From Corollary~\ref{cor 1}, we can see that \aoivio\ becomes one as the target \ac{STP} $\TargetSTP$\ approaches one (i.e.,  $\TargetSTP = 1$) since $\Remain =(\TargetAoI-\TransLat)^+$ approaches zero. This means that, in order to satisfy a higher target \ac{STP}, the \ac{AoI} violation will always happen because the transmission latency $\TransLat$ becomes large. Hence, we can see that a higher target \ac{STP} does not always guarantee a lower \ac{AoI} violation probability.

For the case of non-integer $\alpha$, 
\aoivio\ in \eqref{Intger_AoI_vio} 
can serve as an upper or lower bound of the AoI violation probability
by using $\hat{\alpha}=$ Ceil($\alpha$) or Floor($\alpha$) instead of $\alpha$, respectively, where Ceil($\cdot$) and Floor($\cdot$) are the ceil and floor functions.
This is because the \ac{CCDF} of the consensus latency can be upper bounded by the one with a higher shape parameter under the same rate parameter, and vice versa.  
\end{remark}

\begin{figure}
	\begin{center}   
		{ 
			\psfrag{A1111111111}[Bl][Bl][0.59]   {$\hat{\alpha}$ = Floor ($\alpha$)}
			\psfrag{A2}[Bl][Bl][0.59]   {$\hat{\alpha}$ = Ceil ($\alpha$)}
			\psfrag{A3}[Bl][Bl][0.59]   {$\alpha$ (Exact)}
			\psfrag{Y111111111}[bc][bc][0.9] {\ac{AoI} violation probability, $\Aoivio$}
			\psfrag{X111111111}[tc][tc][0.9] {Target AoI, $\TargetAoI$}
			\includegraphics[width=1.00\columnwidth]{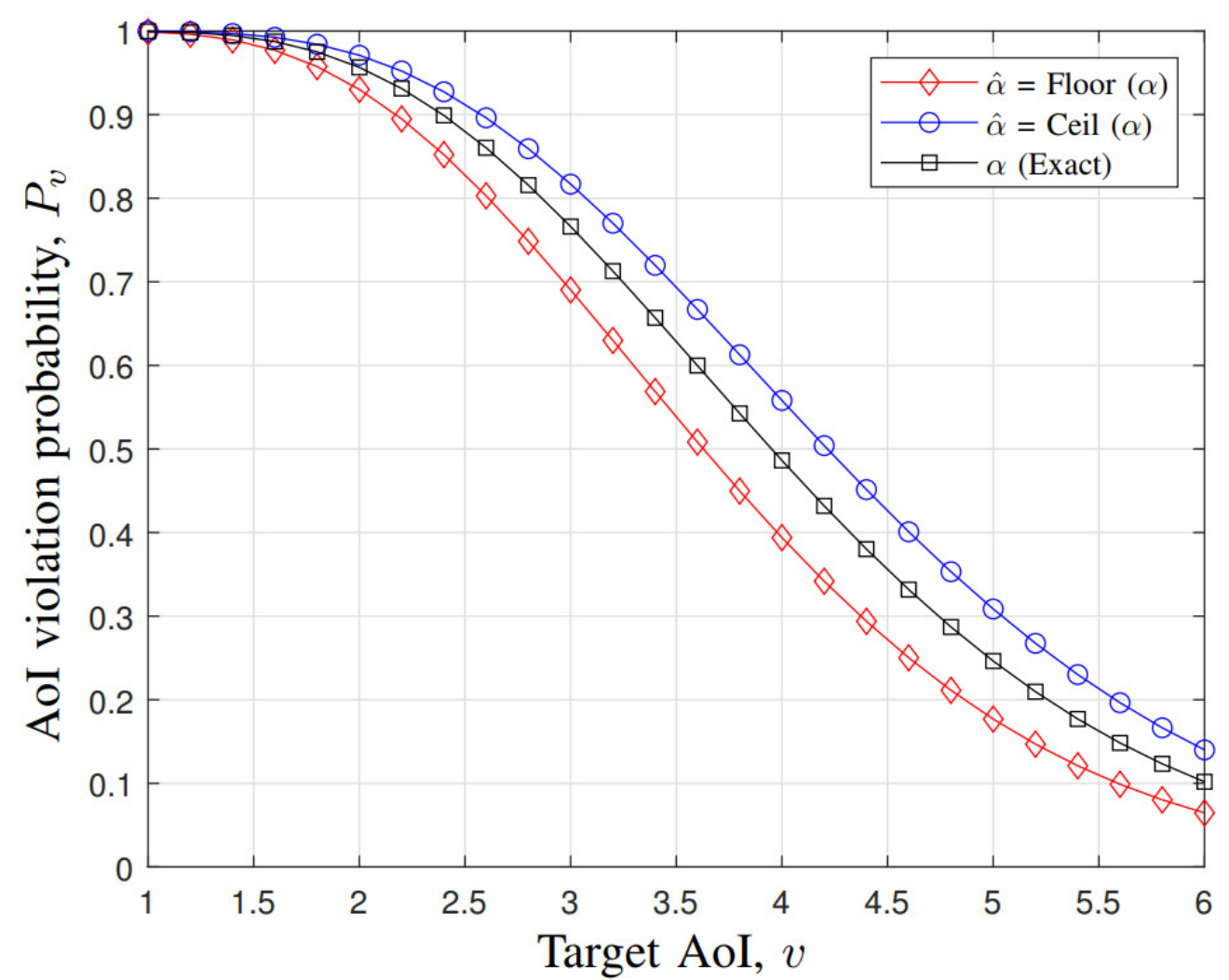}
		}
	\end{center}
	\caption{
		\ac{AoI} violation probability \aoivio{} as a function of a target \ac{AoI} $\TargetAoI$ for different shape parameters.}
	
	\label{fig: bound }
\end{figure}

In Fig. \ref{fig: bound }, we compare the result of Corollary 1 as a function of $\TargetAoI$ for target \ac{STP} $\TargetSTP = 0.5$. The shape parameter ($\alpha = 5.55$) is mapped to integers by the ceil and floor function. As shown in Fig. \ref{fig: bound }, we first observe that the three results show a similar trend. We can also see that the result of Corollary 1 can be the upper or lower bound of Theorem 1 as discussed in Remark 1. 

We now derive the \ac{PAoI} violation probability, which is the probability that the \ac{PAoI} is larger than or equal to a target value $\TargetAoI$, i.e., $\Paoivio = \P[\PAoI \geq \TargetAoI]$. This metric is particularly useful for a system that must maintain the worst-case \ac{AoI} below a certain value as much as possible. In the following lemma, we analyze $\Paoivio$. 

\begin{lemma}
	In \ac{BeMN}, the \ac{PAoI} violation probability \paoivio{} is given by
	\begin{align}
	&\Paoivio = 1 - \frac{\beta^{2 \alpha}}{\Gamma(\alpha)^2} \hspace{-0.5mm} \summation{n=0}{\infty} \hspace{-0.5mm} \summation{k=0}{\infty} \hspace{-0.5mm}
	\frac{\Remain^{2\alpha +n +k} }{n! k! (\alpha + n)} B( \alpha \hspace{-0.5mm} + \hspace{-0.2mm} n \hspace{-0.5mm} + \hspace{-0.5mm} 1, \alpha \hspace{-0.5mm}  + \hspace{-0.5mm}  k \hspace{-0.2mm} ) 
	\nonumber \\
	&\quad \quad \quad \quad \quad \quad \quad \quad \quad \quad
	\times \hspace{-0.5mm}
	\left\{ \hspace{-0.5mm}
	(-\beta)^{n+k} \hspace{-0.5mm}
	- \hspace{-0.5mm} 
	e^{-\rho  \Remain} (\rho \hspace{-0.5mm} - \hspace{-0.5mm} \beta) ^{n+k}   
	\right\} \hspace{-0.5mm} ,
    \label{PAoI_violation}
	\end{align}
	where $\Remain = (\TargetAoI - \TransLat)^+$.
	
\end{lemma}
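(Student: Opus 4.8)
The plan is to write the \ac{PAoI} as a sum of independent random variables and then compute the tail probability of that sum by direct convolution. From \eqref{T_eff} and \eqref{PAoI_definition}, $\PAoI = \Teff + \TotLat = X_{k-1} + \Tint + X_k + \TransLat$, where $X_{k-1}, X_k \sim \text{Gamma}(\alpha,\beta)$ are \ac{i.i.d.}\ and $\Tint \sim \text{Exp}(\rho)$ is independent of both. Since $\TransLat$ is deterministic, writing $S = X_{k-1} + X_k + \Tint$, I would reduce $\Paoivio = \P[\PAoI \geq \TargetAoI]$ to $\P[S \geq (\TargetAoI - \TransLat)^+] = \P[S \geq \Remain]$. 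The clipping $(\cdot)^+$ is legitimate because $S \geq 0$, so whenever $\TargetAoI < \TransLat$ the event is certain and the formula must collapse to $\Paoivio = 1$; it does, since every summand of \eqref{PAoI_violation} carries a factor $\Remain^{2\alpha+n+k}$ that vanishes at $\Remain = 0$.

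The core computation is the complementary probability $\P[S < \Remain] = 1 - \Paoivio$. I would write this as a triple integral of the joint density over the region $\{x + y + z < \Remain,\ x,y,z \geq 0\}$ and integrate out the exponential variable $\Tint$ first, using $\int_0^{\Remain - x - y}\rho e^{-\rho z}\,dz = 1 - e^{-\rho(\Remain - x - y)}$. Substituting the two Gamma densities $f_{X}(x) = \frac{\beta^\alpha}{\Gamma(\alpha)}x^{\alpha-1}e^{-\beta x}$ then leaves a two-dimensional integral over the triangle $\{x + y < \Remain\}$ that splits into a ``$1$'' part $P_1$, carrying $e^{-\beta(x+y)}$, and an exponential part $P_2$, carrying $e^{-\rho \Remain}e^{(\rho-\beta)(x+y)}$ after collecting the exponents.

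For each part I would Taylor-expand the surviving exponential separately in $x$ and $y$, i.e.\ $e^{-\beta x} = \sum_n (-\beta x)^n/n!$ for $P_1$ and $e^{(\rho-\beta)x} = \sum_n ((\rho-\beta)x)^n/n!$ for $P_2$, and likewise in $y$. This produces the two factors $(-\beta)^{n+k}$ and $(\rho - \beta)^{n+k}$ appearing in \eqref{PAoI_violation}. The remaining monomial integral over the triangle, evaluated with $x$ inner, is $\int_0^\Remain y^{\alpha+k-1}\!\left(\int_0^{\Remain - y} x^{\alpha+n-1}\,dx\right)dy = \frac{1}{\alpha+n}\int_0^\Remain y^{\alpha+k-1}(\Remain - y)^{\alpha+n}\,dy = \frac{\Remain^{2\alpha+n+k}}{\alpha+n}B(\alpha+n+1,\alpha+k)$, using the standard Beta integral $\int_0^\Remain y^{a-1}(\Remain-y)^{b-1}\,dy = \Remain^{a+b-1}B(a,b)$ and the symmetry $B(\alpha+k,\alpha+n+1)=B(\alpha+n+1,\alpha+k)$. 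Assembling $\P[S < \Remain] = P_1 - P_2$ and subtracting from $1$ then yields \eqref{PAoI_violation}.

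I expect the only genuine subtlety to be justifying the interchange of the double summation with the integral; this is routine, since both exponential series converge uniformly on the compact triangle $\{x+y \leq \Remain\}$, so dominated convergence permits term-by-term integration. Everything else is bookkeeping. I would deliberately avoid collapsing $X_{k-1}+X_k$ to a single $\text{Gamma}(2\alpha,\beta)$ variable: convolving the two $\text{Gamma}(\alpha,\beta)$ densities directly is precisely what produces the $\beta^{2\alpha}/\Gamma(\alpha)^2$ prefactor and the $B(\alpha+n+1,\alpha+k)$ structure of the stated result.
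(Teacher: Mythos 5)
Your proposal is correct and follows essentially the same route as the paper: decompose $\PAoI = X_{k-1}+X_k+\Tint+\TransLat$, integrate out the exponential variable first to get the $1-e^{-\rho(\Remain-x-y)}$ kernel, series-expand, and evaluate the triangle integral via the Beta function to obtain the $B(\alpha+n+1,\alpha+k)$ double sum. The only difference is bookkeeping — the paper first collapses the $X_k$ integral into incomplete gamma functions and then expands those as series, whereas you expand both exponentials up front and integrate monomials — which yields the identical result.
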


\begin{proof}
	%
 	From \eqref{PAoI_definition}, the \ac{CDF} of $\PAoI$ is defined as
	\begin{align}
	F_{\small  \PAoI }(\TargetAoI) 
	&=  \P[\Teff \hspace{-0.5mm} + \hspace{-0.5mm} \TotLat \leq \TargetAoI] \nonumber \\
	& \overset{(a)}{=} \P[X_{k-1} +X_k + \Tint + \TransLat \leq \TargetAoI] , 
	\label{PAoI_CDF}
	\end{align}	
	where $(a)$ is from \eqref{total_latency} and \eqref{T_eff}. We then have
	\begin{align}
	& F_{\small  \PAoI }(\TargetAoI) = \hspace{-1.5mm} \integral{0}{\Remain} \hspace{-1.7mm} \P[x + X_k + \Tint \leq \TargetAoI - \TransLat ]  f_{X_{k-1}}(x)\mathrm{d}x  \nonumber \\
	&=  \hspace{-1.6mm}
	\integral{0}{\Remain}  \hspace{-2.5mm}
	\integral{0}{\Remain \hspace{-0.8mm} -   x} \hspace{-1.7mm} 
	\left\{
	\hspace{-0.5mm} 1 \hspace{-0.5mm} - \hspace{-0.5mm} e^{-\rho \left( \Remain \hspace{-0.8mm} - \hspace{-0.2mm} x \hspace{-0.2mm} -\hspace{-0.2mm} x'   \right) } \hspace{-0.5mm}
	\right\}
	\hspace{-0.9mm} f_{X_k} \hspace{-0.5mm} (x' \hspace{-0.2mm}) f_{X_{k-1}} \hspace{-0.5mm} (x) \mathrm{d}x' \mathrm{d}x \nonumber \\
	&= \frac{1}{\Gamma(\alpha)} \integral{0}{\Remain} 
	\bigg[
	\gamma\left(
	\alpha, \beta \left( \Remain - x  \right)    
	\right) - \frac{\beta^\alpha e^{-\rho \left( \Remain - x \right)}  } { \left( \beta - \rho \right)^\alpha }   \nonumber \\
	& \quad \quad \quad \quad \quad \quad \ \times \hspace{-0.7mm}  \gamma \hspace{-0.7mm}
	\left( 
	\alpha, \hspace{-0.5mm} \left( \beta \hspace{-0.5mm} - \hspace{-0.5mm} \rho   \right) \hspace{-0.5mm}  \left( \Remain \hspace{-0.5mm} - \hspace{-0.5mm} x  \right)   
	\right) \hspace{-0.5mm}  \bigg] \hspace{-0.5mm}  f_{X_{k-1}} \hspace{-0.5mm} (x) \mathrm{d}x.
	\label{PAoI_1}
	\end{align}
	Using the Taylor series of $\displaystyle \exp(-\beta(\Remain - x))$ and \eqref{8.354-1}, $F_{\small  \ac{PAoI}_k }(\TargetAoI)$ in \eqref{PAoI_1} can be given by
	\begin{align}
	&F_{\small  \PAoI }(\TargetAoI) \nonumber \\
	&=\frac{\beta^ {2 \alpha}}  {\Gamma(\alpha)^2 } \hspace{-1.0mm} \summation{n=0}{\infty} \hspace{-0.5mm} \summation{k = 0} { \infty }  \hspace{-0.5mm} 	
	\bigg[
	\frac{ 1} { n! k! \left( \alpha +n  \right) } \hspace{-0.5mm}  \integral{0}{\Remain} \hspace{-3.5mm}  \left(  \Remain - x \right)^n x^{\alpha + k - 1} 
	\nonumber \\
	& 
	\quad \hspace{-0.5mm}
	\times 
	\left\{
	(-\beta)^{n+k} - 
	e^{-\rho \Remain   }   \left( \rho \hspace{-0.5mm} - \hspace{-0.5mm} \beta \right)^{n+k} 
	\hspace{-0.7mm} (\Remain  \hspace{-0.5mm} - \hspace{-0.5mm} x)^{\alpha} \mathrm{d}x 
	\right\}
	\bigg]. \label{PAoI_3}
	\end{align}
	Using ${\int_{0}^{\alpha} x^{\beta-1} (\alpha - x)^{m-1} \mathrm{d}x = \alpha^{m+\beta-1} B(m, \beta) }$ \cite[Equation 3.191-1]{ToI} and \eqref{gamma_pdf}, $F_{\small  \ac{PAoI}_k }(\TargetAoI)$ in \eqref{PAoI_3} is given by
	\begin{align}
	&F_{\small  \PAoI }(\TargetAoI) = \frac{\beta^{2 \alpha}}{\Gamma(\alpha)^2} \hspace{-0.5mm} \summation{n=0}{\infty} \hspace{-0.5mm} \summation{k=0}{\infty} \hspace{-0.5mm}
	\frac{\Remain^{2\alpha +n +k} }{n! k! (\alpha + n)} B( \alpha \hspace{-0.5mm} + \hspace{-0.2mm} n \hspace{-0.5mm} + \hspace{-0.5mm} 1, \alpha \hspace{-0.5mm}  + \hspace{-0.5mm}  k \hspace{-0.2mm} ) 
	\nonumber \\
	&\quad \quad \quad \quad \quad 
	\times
	\left\{
	(-\beta)^{n+k}
	- e^{-\rho \Remain  } (\rho - \beta) ^{n+k}   
	\right\}.
	\label{before derivation}
	\end{align}
	By taking the complement rule to \eqref{before derivation}, we obtain \paoivio{} as \eqref{PAoI_violation}.
\end{proof}

Lemma~2 can show the distribution of the worst-case of data freshness during update intervals. This result can be used to control the tail of the worst-case \ac{AoI}, especially with applications that require a strict threshold of data freshness.

	\section{Experimental and Numerical Results} 
 	\label{Sec: Numerical Results}
 	In this section, we present experimental and numerical results to verify the analysis of the \ac{AoI} violation probability \aoivio\ and to show the impact of \ac{BeMN} parameters, i.e., target \ac{STP} $\TargetSTP$, block size \blocksize, and timeout \timeout, on data freshness. 

 	 	\begin{table*}[t!]
 		\caption{Average estimated shape and rate parameters $(\alpha, \beta)$ of the  consensus latency distribution and the average KS statistics for different target \ac{STP} $\TargetSTP$ and block size \blocksize.} 
 		
 		\renewcommand{\arraystretch}{1.3}
 		\resizebox{1 \textwidth}{!}{%
 			\begin{tabular}{ccccclccccc}
 				\cline{1-5} \cline{7-11}
 				$\TargetSTP$ & \begin{tabular}[c]{@{}c@{}}Average estimate\\ $(\alpha, \beta)$\end{tabular} & \begin{tabular}[c]{@{}c@{}}Average\\ latency/SD\end{tabular} & \begin{tabular}[c]{@{}c@{}}Average\\ skewness\end{tabular} & \begin{tabular}[c]{@{}c@{}}Average\\ KS statistics\end{tabular} &  & \blocksize & \begin{tabular}[c]{@{}c@{}}Average estimate\\ $(\alpha, \beta)$\end{tabular} & \begin{tabular}[c]{@{}c@{}}Average\\ latency/SD\end{tabular} & \begin{tabular}[c]{@{}c@{}}Average\\ skewness\end{tabular} & \begin{tabular}[c]{@{}c@{}}Average\\ KS statistics\end{tabular} \\ \cline{1-5} \cline{7-11} 
 				0.3     & (5.64, 3.01)                                                                 & 2.42/0.95                                                    & 0.093                                                      & 0.0732                                                          &  & 3         & (1.62, 0.30)                                                                 & 5.71/4.03                                                    & 0.342                                                      & 0.0831                                                          \\
 				0.4     & (5.94, 2.45)                                                                 & 2.42/0.92                                                    & 0.086                                                      & 0.0623                                                          &  & 5         & (2.90, 1.38)                                                                 & 2.16/1.25                                                    & 0.182                                                      & 0.0333                                                          \\
 				0.5     & (5.39, 2.85)                                                                 & 2.17/0.87                                                    & 0.095                                                      & 0.0506                                                          &  & 7         & (4.35, 2.58)                                                                 & 1.70/0.85                                                    & 0.121                                                      & 0.0498                                                          \\
 				0.6     & (5.42, 2.84)                                                                 & 1.90/0.76                                                    & 0.097                                                      & 0.0504                                                          &  & 10        & (5.24, 3.30)                                                                 & 1.59/0.74                                                    & 0.099                                                      & 0.0495                                                          \\
 				0.7     & (7.18, 3.73)                                                                 & 1.92/0.67                                                    & 0.071                                                      & 0.0462                                                           &  & 12        & (5.81, 3.66)                                                                 & 1.58/0.63                                                    & 0.074                                                      & 0.0382                                                          \\
 				0.8     & (7.71, 4.12)                                                                 & 1.87/0.63                                                    & 0.066                                                      & 0.0423                                                           &  & 15        & (6.95, 3.85)                                                                 & 1.80/0.65                                                    & 0.074                                                      & 0.0381                                                          \\
 				0.9     & (7.50, 4.35)                                                                 & 1.73/0.60                                                    & 0.068                                                      & 0.0369                                                           &  & 20        & (5.42, 2.84)                                                                 & 1.90/0.76                                                    & 0.097                                                      & 0.0504                                                          \\
 				1.0     & (6.57, 3.82)                                                                 & 1.76/0.76                                                    & 0.085                                                      & 0.0532                                                          &  & 25        & (4.85, 2.36)                                                                 & 2.05/0.86                                                    & 0.107                                                      & 0.0604                                                          \\ \cline{1-5} \cline{7-11} 
 			\end{tabular}
 		}

 		\label {Table 1}

 	\end{table*}%

 	\begin{table}[t!]
 		\caption{Average estimated shape and rate parameters $(\alpha, \beta)$ of the  consensus latency distribution and the average KS statistics for different timeout \timeout.} 
 		\begin{center}
 			\renewcommand{\arraystretch}{1.3}

 			\resizebox{0.5 \textwidth}{!}{%
 				\begin{tabular}{ccccc}
 					\hline
 					\timeout & \begin{tabular}[c]{@{}c@{}}Average estimate\\ $(\alpha, \beta)$\end{tabular} & \begin{tabular}[c]{@{}c@{}}Average\\ latency/SD\end{tabular} & \begin{tabular}[c]{@{}c@{}}Average\\ skewness\end{tabular} & \begin{tabular}[c]{@{}c@{}}Average\\ KS statistics\end{tabular} \\ \hline
 					0.5       & (2.74, 0.89)                                                                 & 3.08/2.00                                                    & 0.194                                                      & 0.0420                                                          \\
 					0.6       & (4.26, 2.04)                                                                 & 2.10/1.07                                                    & 0.122                                                      & 0.0452                                                          \\
 					0.7       & (8.28, 5.40)                                                                 & 1.53/0.54                                                    & 0.061                                                      & 0.0494                                                          \\
 					0.75      & (6.78, 5.19)                                                                 & 1.30/0.47                                                    & 0.076                                                      & 0.0489                                                          \\
 					1.0       & (6.96, 4.65)                                                                 & 1.50/0.54                                                    & 0.075                                                      & 0.0588                                                          \\
 					1.25      & (9.62, 5.37)                                                                 & 1.79/0.55                                                    & 0.053                                                      & 0.0446                                                          \\
 					1.50      & (9.86, 5.20)                                                                 & 1.89/0.56                                                    & 0.052                                                      & 0.0603                                                          \\
 					2.0       & (6.79, 3.62)                                                                 & 1.87/0.66                                                    & 0.075                                                      & 0.0535                                                          \\
 					2.5       & (5.64, 3.01)                                                                 & 1.97/0.72                                                    & 0.091                                                      & 0.0497                                                          \\
 					3.0       & (5.42, 2.84)                                                                 & 1.89/0.76                                                    & 0.097                                                      & 0.0504                                                          \\
 					3.5       & (5.39, 2.85)                                                                 & 1.89/0.75                                                    & 0.091                                                      & 0.0503                                                          \\ \hline
 				\end{tabular}
 			}

 			\label{Table 2}
 		\end{center}

 	\end{table}

 	\subsection{Simulation Setup and KS Test}
 	For our simulations, unless otherwise specified, we use $\rho_s = 15$, $P = 1$ W, $N_0 = -100$ dBm, $W = 1$ MHz, $D = 500$ kb, $\lambda = 0.0001$ (BS/km$^2$), $l = 37$ m, $\TargetSTP=0.6$, $\Blocksize = 20$, and $\Timeout = 3$. We implement an \ac{HLF} platform with version 1.3 \cite{Git} on one physical machine with Intel(R) Xeon W-2155 @ 3.30GHz with 16 GB of RAM.
 	The established \ac{HLF} consists of one peer and two committing peers. Note that the committing peers only verify new blocks conveyed from the ordering service. We generate transactions to update a certain target
 	key-value, which occupies 30 percent of the whole generated transactions. We measure the consensus latency of the generated target transactions with varying \ac{BeMN} parameters. To investigate the effect of each parameter, we fit statistical distribution of thousand transactions for different \ac{BeMN} parameters. Note that we use the maximum likelihood estimation method introduced in Section \ref{Consensus Latency Modeling}. The accuracy of the consensus latency modeling is investigated by using the \ac{KS} test in the following.
  	
 	The \ac{KS} test returns the absolute value of the largest discrepancy between an empirical and theoretical cumulative distribution, known as the \ac{KS} statistic \cite{ANGTANG}. A smaller \ac{KS} statistic means a higher accuracy of the modeling. The \ac{KS} test compares the \ac{KS} statistic with a critical value, which is determined by the number of samples and a significance level. If the \ac{KS} statistic is smaller than the critical value, it is seen as a reasonable theoretical model for the empirical distribution. From the generated transactions above, the corresponding estimated parameters $(\alpha, \beta)$ of the consensus latency and the \ac{KS} statistics with a significance level of 0.01 are averaged over five runs and presented in Tables \ref{Table 1} and \ref{Table 2}. We also present the average \ac{SD} and the skewness of the measured consensus latencies from the experiments. Note that the critical value of the \ac{KS} test for 1000 samples is 0.0515.

 	\begin{figure}
 		\centering
 		\begin{center}   
 			{ 
 				\psfrag{aaaaaaaaaaaaaaaaaaaaa}[Bl][Bl][0.59]   {Experiment, $\TargetSTP$ = 0.4}
 				\psfrag{a2}[Bl][Bl][0.59]   {Simulation, $\TargetSTP$ = 0.4}
 				\psfrag{a3}[Bl][Bl][0.59]   {Analysis, $\TargetSTP$ = 0.4}
 				\psfrag{b1}[Bl][Bl][0.59]   {Experiment, $\TargetSTP$ = 0.6}
 				\psfrag{b2}[Bl][Bl][0.59]   {Simulation, $\TargetSTP$ = 0.6}
 				\psfrag{b3}[Bl][Bl][0.59]   {Analysis, $\TargetSTP$ = 0.6}
 				\psfrag{c1}[Bl][Bl][0.59]   {Experiment, $\TargetSTP$ = 0.8}
 				\psfrag{c2}[Bl][Bl][0.59]   {Simulation, $\TargetSTP$ = 0.8}
 				\psfrag{c3}[Bl][Bl][0.59]   {Analysis, $\TargetSTP$ = 0.8}
 				\psfrag{Y1111111111}[bc][bc][0.9] {\ac{AoI} violation probability, \aoivio}
 				\psfrag{X1111111111}[tc][tc][0.9] {Target AoI, $\TargetAoI$}
 				\includegraphics[width=1.00\columnwidth]{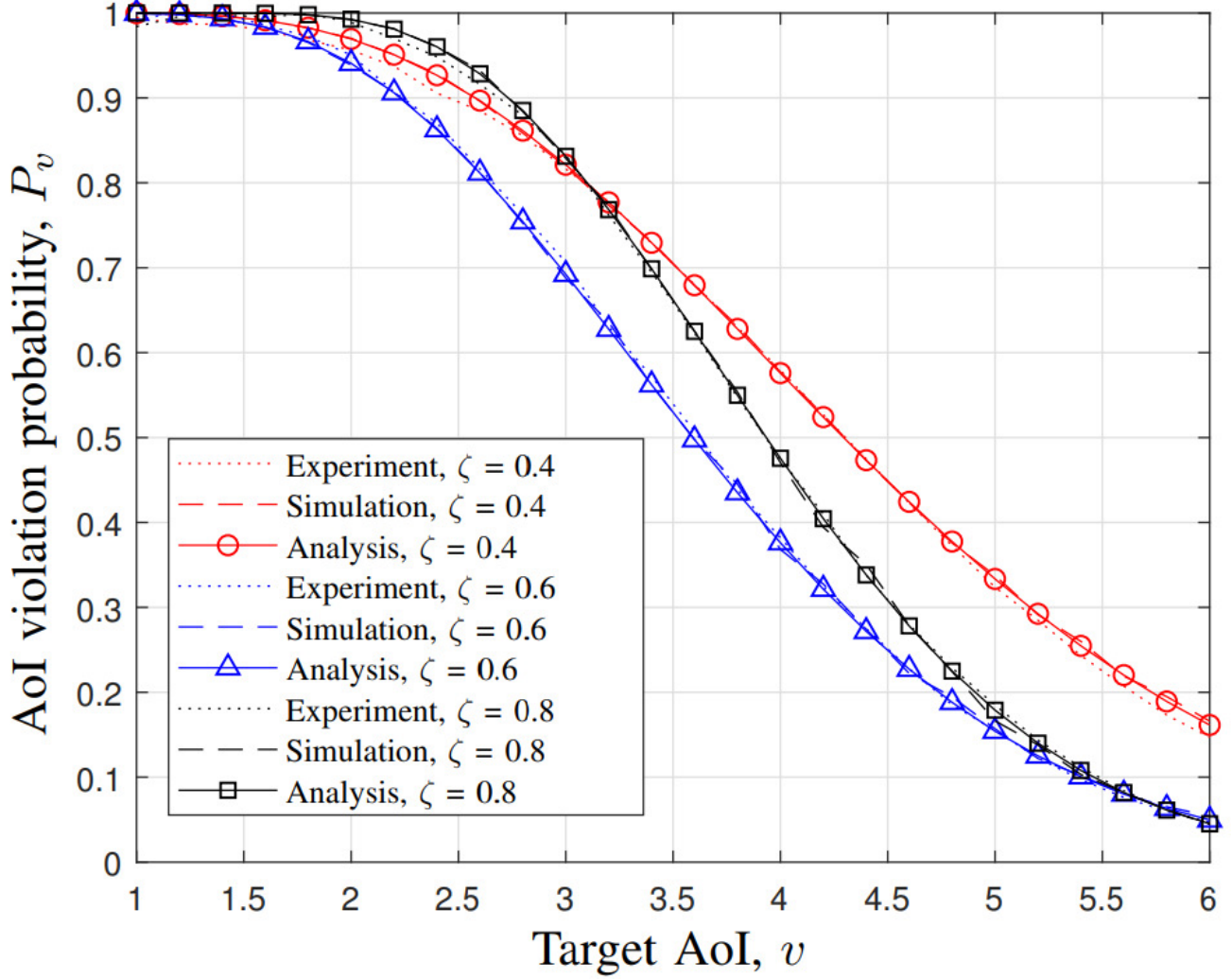}
 			}
 		\end{center}
 		\caption{\ac{AoI} violation probability \aoivio{} as a function of a target \ac{AoI} $\TargetAoI$ for different target \ac{STP}s $\TargetSTP$.}
 		
 		\label{fig: STP_validation}	
 	\end{figure}

 	\begin{figure}
 		\centering
 		\begin{center}
 			\psfrag{aaaaaaaaaaaaaaaaaaaaa}[Bl][Bl][0.59]   {Experiment, \blocksize = 5 }
 			\psfrag{a2}[Bl][Bl][0.59]   {Simulation, \blocksize = 5}
 			\psfrag{a3}[Bl][Bl][0.59]   {Analysis, \blocksize = 5}
 			\psfrag{b1}[Bl][Bl][0.59]   {Experiment, \blocksize = 12 }
 			\psfrag{b2}[Bl][Bl][0.59]   {Simulation, \blocksize = 12}
 			\psfrag{b3}[Bl][Bl][0.59]   {Analysis, \blocksize = 12}
 			\psfrag{c1}[Bl][Bl][0.59]   {Experiment, \blocksize = 25}
 			\psfrag{c2}[Bl][Bl][0.59]   {Simulation, \blocksize = 25}
 			\psfrag{c3}[Bl][Bl][0.59]   {Analysis, \blocksize = 25}
 			\psfrag{Y1111111111}[bc][bc][0.9] {\ac{AoI} violation probability, \aoivio}
 			\psfrag{X1111111111}[tc][tc][0.9] {Target AoI, $\TargetAoI$}
 			\includegraphics[width=1.0\columnwidth]{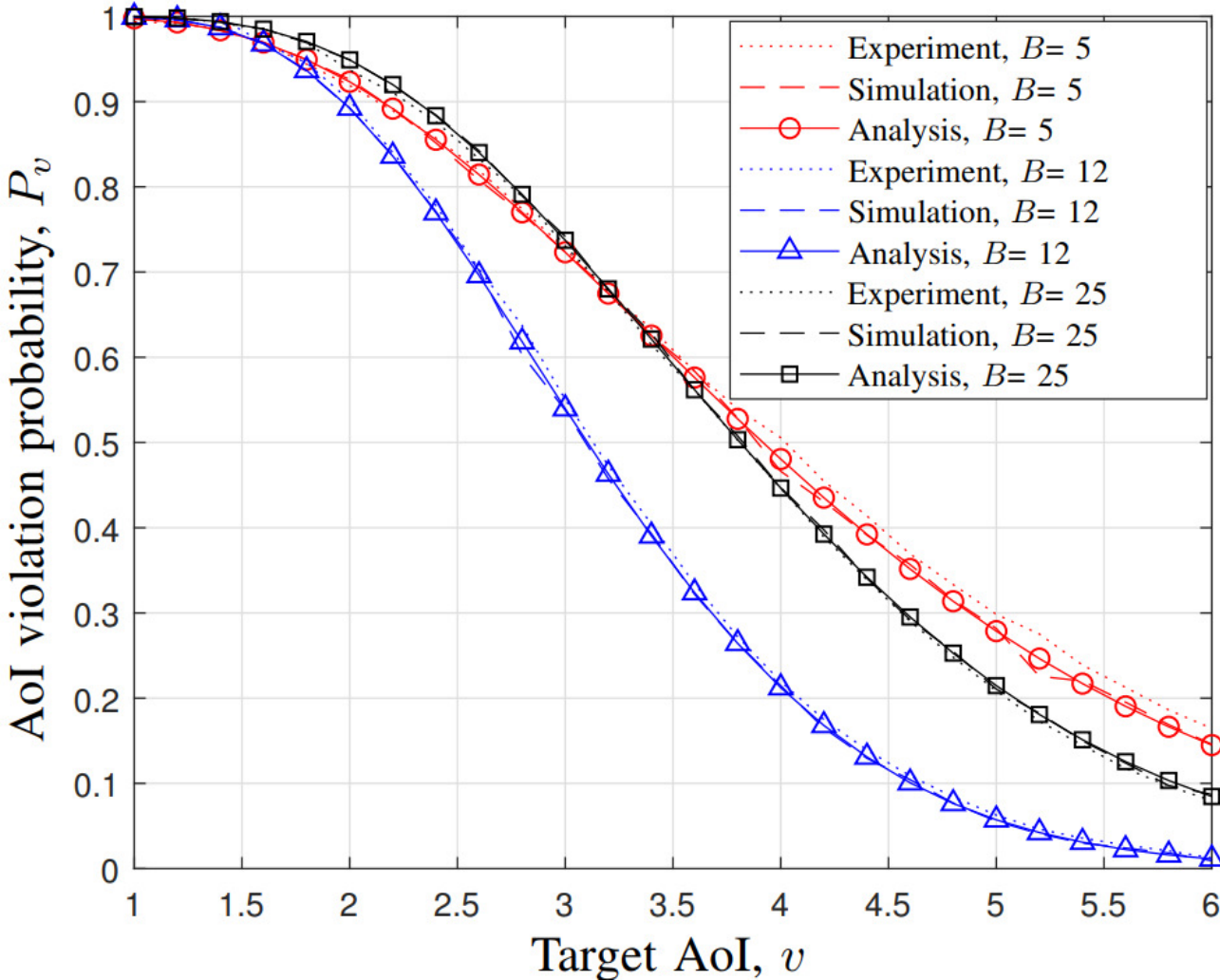}
 		\end{center}
 		
 		\caption{\ac{AoI} violation probability \aoivio{} as a function of target \ac{AoI} $\TargetAoI$ for different block size \blocksize.}
 		
 		\label{fig: block_validation}	
 		\vspace{-0.2cm}
 	\end{figure}

 	\begin{figure}
 		\centering
 		\begin{center}
 			\psfrag{aaaaaaaaaaaaaaaaaa}[Bl][Bl][0.59]   {Experiment, \timeout = 0.5}
 			\psfrag{a2}[Bl][Bl][0.59]   {Simulation, \timeout = 0.5}
 			\psfrag{a3}[Bl][Bl][0.59]   {Analysis, \timeout = 0.5}
 			\psfrag{b1}[Bl][Bl][0.59]   {Experiment, \timeout = 0.75}
 			\psfrag{b2}[Bl][Bl][0.59]   {Simulation, \timeout = 0.75}
 			\psfrag{b3}[Bl][Bl][0.59]   {Analysis, \timeout = 0.75}
 			\psfrag{c1}[Bl][Bl][0.59]   {Experiment, \timeout = 2.5}
 			\psfrag{c2}[Bl][Bl][0.59]   {Simulation, \timeout = 2.5}
 			\psfrag{c3}[Bl][Bl][0.59]   {Analysis, \timeout = 2.5}
 			\psfrag{Y1111111111}[bc][bc][0.9] {\ac{AoI} violation probability, \aoivio}
 			\psfrag{X1111111111}[tc][tc][0.9] {Target AoI, $\TargetAoI$}
 			\includegraphics[width=1.0\columnwidth]{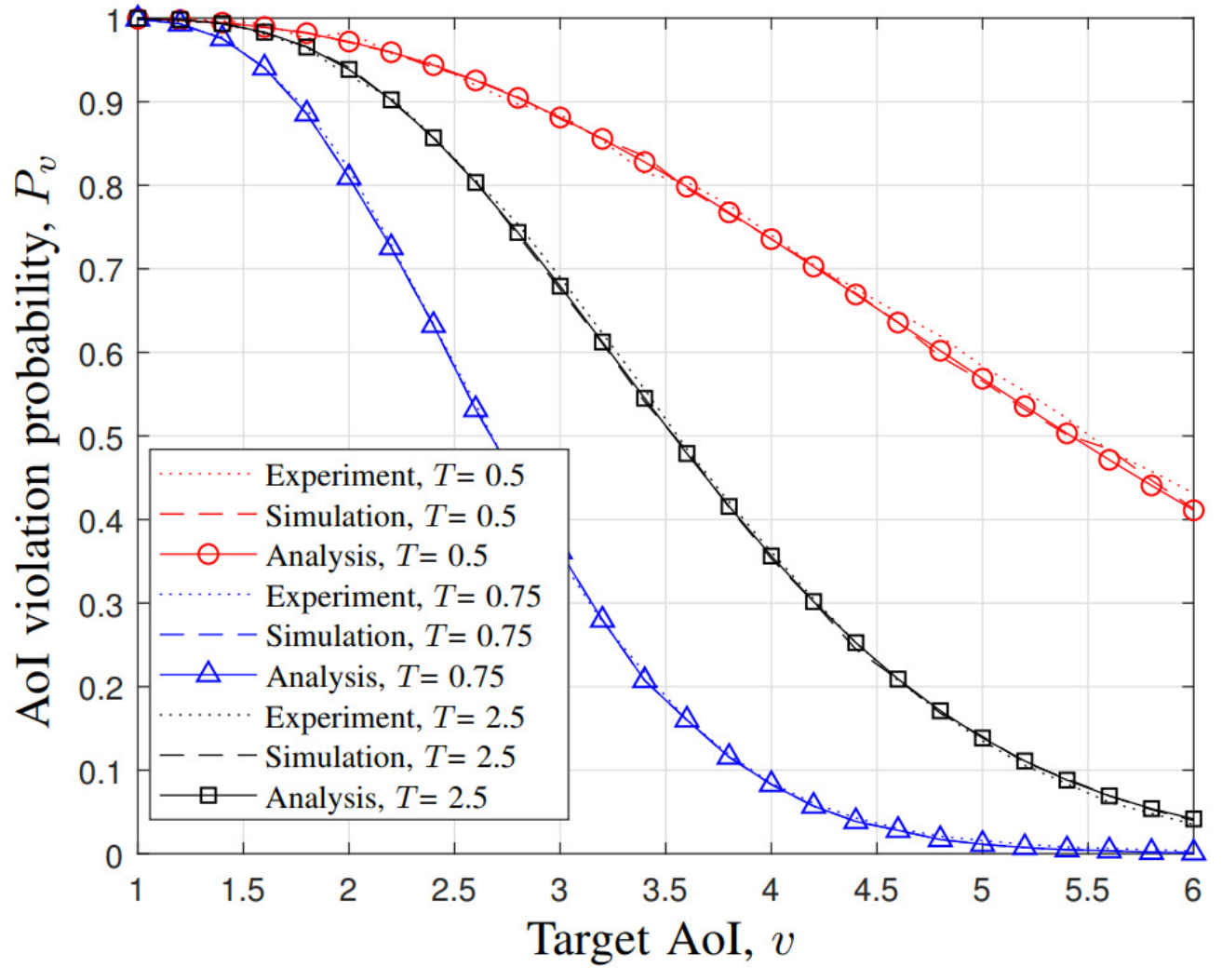}
 		\end{center}
 		
 		\caption{\ac{AoI} violation probability \aoivio{} as a function of target \ac{AoI} $\TargetAoI$ for different timeout \timeout.}
 		
 		\label{fig: timeout_validation}	
 		\vspace{-0.4cm}
 	\end{figure} 
 	In Figs. \ref{fig: STP_validation}, \ref{fig: block_validation}, and \ref{fig: timeout_validation}, we compare the analytical results with the simulation and the experimental results while varying the \ac{BeMN} parameters as a function of $\TargetAoI$. In the simulation results, \aoivio\ is calculated by generating the consensus latencies using the modeled Gamma distribution. In the experimental results, we calculate \aoivio\ by using the measured consensus latencies in the established \ac{HLF} platform. 
 	
 	Figure \ref{fig: STP_validation} presents \aoivio\ for the simulation and the experimental results as a function of $v$ for $\TargetSTP = 0.4, 0.6,$ and $0.8$. We can first observe that the analytical and simulation results show a good match. We also observe that the analytical results match the experimental results, obtained from the \ac{HLF} platform. 
 	Note that for $\TargetSTP = 0.4$, the experimental results also match well with the analytical results although the KS statistic of this case is 0.0623, which is larger than the critical value 0.0515 (see Table \ref{Table 1}). This is because the distribution of the consensus latency and its modeled distribution have similar statistical properties. Specifically, from Table \ref{Table 1}, we can see that the average latency, the average SD, and the skewness of the measured latencies in the experiments are $2.42$, $0.92$, and $0.086$, respectively. These are similar to the values calculated from the modeled distribution, which are $\frac{\alpha}{\beta} = 2.42, \frac{\sqrt{\alpha}}{\beta} = 0.98$, and $\frac{2}{\sqrt{\alpha}}= 0.082$.
 	Therefore, our \ac{AoI} analysis can capture the actual data freshness in \ac{BeMN}.
	In addition, when the target \ac{AoI} $\TargetAoI$ is small (e.g., $\TargetAoI < 3$), the value of the \ac{AoI} violation probability \aoivio\ for  $\TargetSTP = 0.4 $ is less than the case in which $\TargetSTP = 0.8$. This is because the transmission latency \transLat\ increases as $\TargetSTP$ becomes larger, and, thus it becomes difficult to complete the status update within a short time for high $\TargetSTP$. On the other hand, when $\TargetAoI$ is large (e.g., $\TargetAoI \geq 3$), \aoivio\ for  $\TargetSTP = 0.4$ is higher than the case in which $\TargetSTP = 0.8$. In this case, most status updates can be performed without violating the large target \ac{AoI} $v$. Hence, the number of successfully received packets becomes more important than reducing \transLat\ to maintain low \aoivio.
 	
 	Figure \ref{fig: block_validation} shows the analytical, simulation, and experimental results of \aoivio\ as a function of $\TargetAoI$ for $\Blocksize = $ 5, 12, and 25.
  	For small $\TargetAoI$ (e.g., $\TargetAoI < 3$), the value of the \ac{AoI} violation probability \aoivio\ for $\Blocksize = 5 $ is lower than the case in which $\Blocksize = 25$. However, this relationship is reversed when $\TargetAoI$ is large (e.g., $\TargetAoI \geq 3$). This is because the consensus latency distributions of the both cases show different statistical characteristics. From Table \ref{Table 1}, we observe that \aoivio\ for $\Blocksize = 5$ yields larger \ac{SD} and skewness than the case in which $\Blocksize = 25$. Thus, the distribution of $\Blocksize = 5$ is concentrated more on the left, and it has a larger tail probability compared with $\Blocksize = 25$.
 	
  	Figure \ref{fig: timeout_validation} presents the three results of \aoivio\ as a function of $\TargetAoI$ for $\Timeout = 0.5, 0.75,$ and $2.5$.
 	We can see that the \ac{AoI} violation probability \aoivio\ for $\Timeout = 0.5$ achieves the largest mean latency, \ac{SD} and skewness. Although its distribution is highly skewed to the left, it is mostly concentrated near the large mean value. Moreover, the high \ac{SD} results in a large tail probability, which leads to larger \aoivio\ than the other cases.  

	As shown in Figs.~\ref{fig: STP_validation}, \ref{fig: block_validation}, and \ref{fig: timeout_validation}, having a smaller (or larger) value for the target \ac{STP}, block size, or timeout does not always guarantee a lower (or larger) \ac{AoI} violation probability.
	This is due to the conflicting effects of those parameters on the transmission and the consensus latencies. These effects will be discussed more in the following subsection.
 	
 	\begin{figure}[t]
 		
 		\centering
 		\begin{center}
 			\psfrag{Aaaaaaaaaaa}[bc][bc][0.9] {\ac{AoI} and \ac{PAoI} violation probability \aoivio, \paoivio}
 			\psfrag{aaaaaaaaaaa}[tc][tc][0.9] {Target STP, $\TargetSTP$}
 			\psfrag{a}[tc][tc][0.9] {Average \ac{AoI} $\bar{\Delta}$}
 			
 			\psfrag{A111111111111111111111111}[Bl][Bl][0.59]   {\ac{AoI} violation probability, \aoivio{} }
 			\psfrag{b1}[Bl][Bl][0.59]   {\ac{PAoI} violation probability, \paoivio{} }
 			\psfrag{c1}[Bl][Bl][0.59]   {Average \ac{AoI}, \avgaoi{} }
 			
 			\includegraphics[width=1.0\columnwidth]{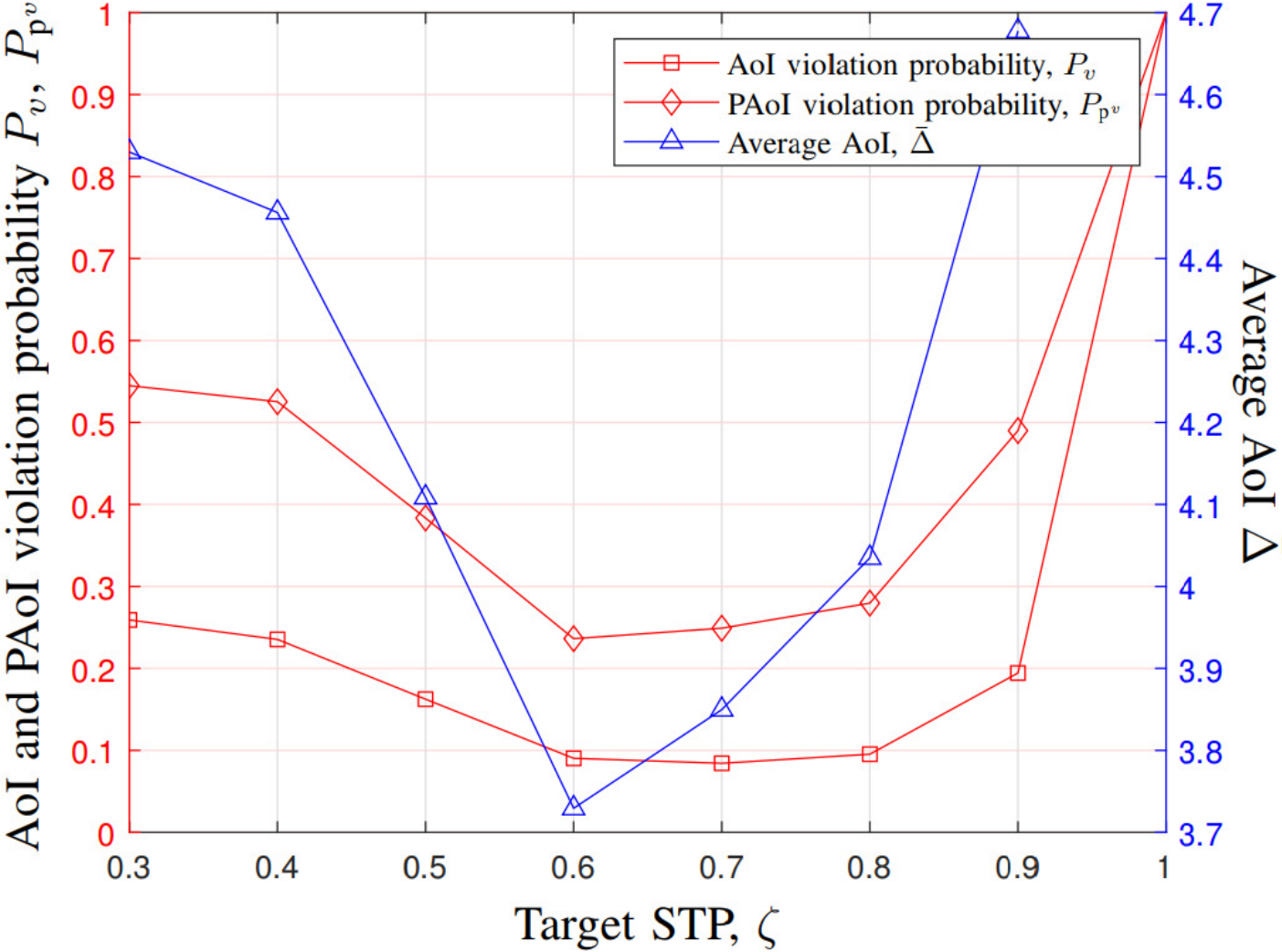}
 		\end{center}
 		
 		\caption{\ac{AoI} violation probability \aoivio{}, \ac{PAoI} violation probability \paoivio{}, and average \ac{AoI} \avgaoi{} as a function of target \ac{STP} $\TargetSTP$ for the target \ac{AoI} $\TargetAoI = 5.5$ and the data size $D = 250$ kb.}
 		
 		\label{fig: STP}	
 		
 	\end{figure}
 	\subsection{Impact of \ac{BeMN} Parameters}
 	Figure \ref{fig: STP} shows \aoivio, \paoivio, and \avgaoi\ for different values of $\TargetSTP$ with $\TargetAoI = 5.5$,  and $D = 250$ kb. From Fig. \ref{fig: STP}, we can observe that \aoivio, \paoivio, and \avgaoi\ have a similar trend according to $\TargetSTP$. In particular, all metrics first decrease and then increase with $\TargetSTP$. This is because when $\TargetSTP$ is small, there are many outages in the packet transmission, so the status is seldom updated, which increases the \ac{AoI}. Hence, in this case, the \ac{AoI}-related performance becomes better as $\TargetSTP$ increases. However, as $\TargetSTP$ increases, the transmission latency \transLat\ also increases. Hence, when $\TargetSTP$ is higher than a certain value, the packets are reliably received at the \ac{HLF}, but the \ac{AoI}-related performance degrades with $\TargetSTP$ due to the longer transmission latency. 
 	
 	\begin{figure}[t]
 		
 		\centering
 		\begin{center}
 			\psfrag{Aaaaaaaaaaa}[bc][bc][0.9] {\ac{AoI} and \ac{PAoI} violation probability \aoivio, \paoivio}
 			\psfrag{aaaaaaaaaaa}[tc][tc][0.9] {Block size, \blocksize}
 			\psfrag{a}[tc][tc][0.9] {Average \ac{AoI}, $\bar{\Delta}$}
 			
 			\psfrag{A111111111111111111111111}[Bl][Bl][0.59] {\ac{AoI} violation probability, \aoivio{} }
 			\psfrag{b1}[Bl][Bl][0.59]   {\ac{PAoI} violation probability, \paoivio{} }
 			\psfrag{c1}[Bl][Bl][0.59]   {Average \ac{AoI}, \avgaoi{} }
 			\includegraphics[width=1.0\columnwidth]{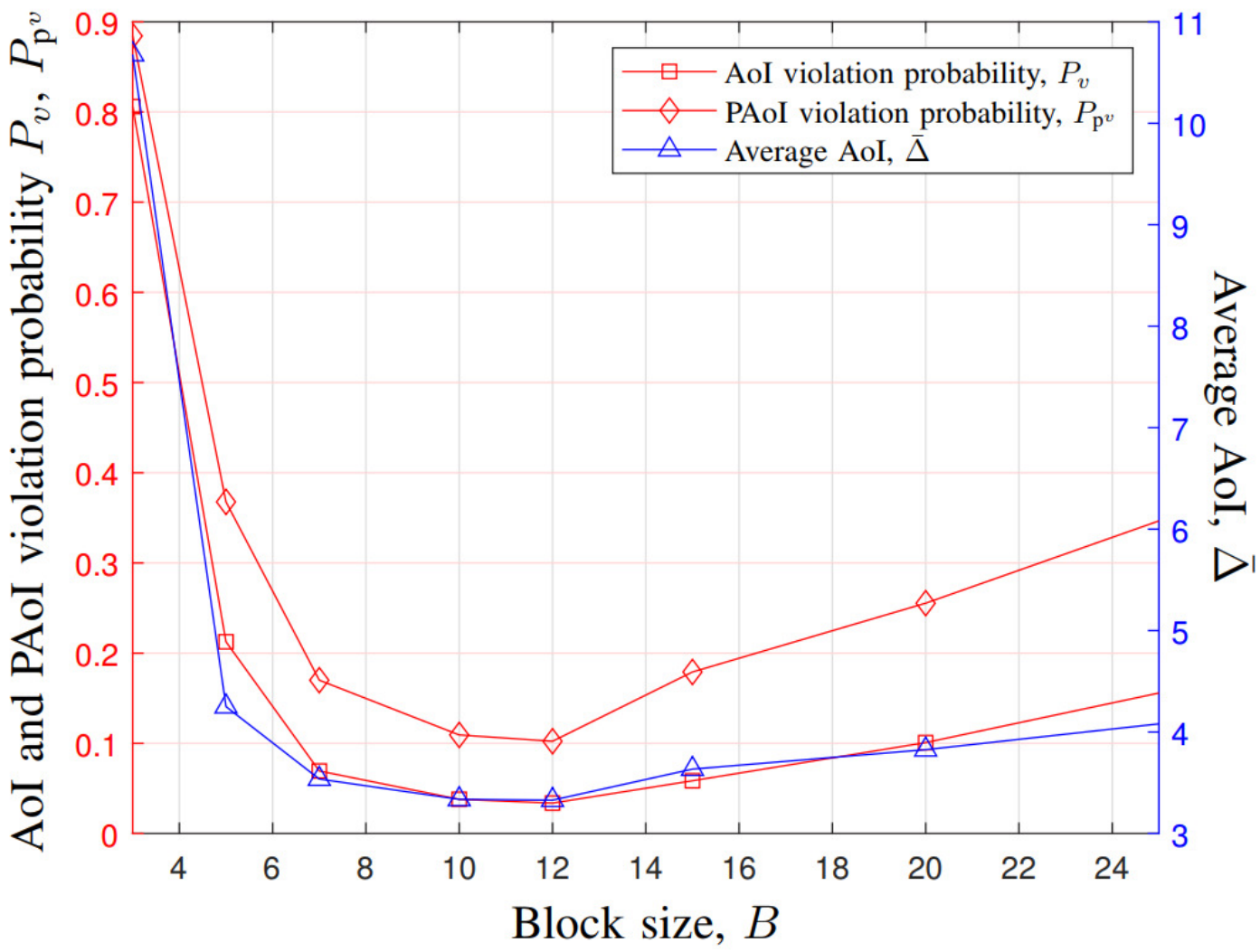}
 		\end{center}
 		
 		\caption{\ac{AoI} violation probability \aoivio{}, \ac{PAoI} violation probability \paoivio{}, and average \ac{AoI} \avgaoi{} as a function of block size \blocksize{} for the target \ac{AoI} $\TargetAoI = 5.5$ and the data size $D = 250$ kb.}
 		
 		\label{fig: blocksize}	
 		
 	\end{figure}
 	
 	Figure \ref{fig: blocksize} presents \aoivio, \paoivio, and \avgaoi\ for different values of \blocksize\ with $\TargetAoI = 5.5$ and $D = 250$ kb. As shown in Fig. \ref{fig: blocksize}, all metrics, \aoivio, \paoivio, and \avgaoi, first decrease and then increase with \blocksize.
 	When \blocksize\ is small, the block generation rate exceeds the block commitment rate in the validation phase, which increases the consensus latency.
 	In particular, the time to commit the $m$ blocks of size $\frac{\Blocksize}{m}$ is always larger than the time to commit the block of size $\Blocksize$ for $m \geq 1$ \cite{PSBV:18}.
 	Hence, in this case, the \ac{AoI}-related performance improves as \blocksize{} increases. 
 	However, as \blocksize{} keeps increasing, the amount of time that a transaction has to wait in a block also increases. Therefore, when \blocksize\ exceeds a certain value, the \ac{AoI}-related performance decreases with \blocksize{} due to longer waiting time in the ordering phase.
 	  	
 	\begin{figure}[t]
 		
 		\centering
 		\begin{center}
 			\psfrag{Aaaaaaaaaaa}[bc][bc][0.9] {\ac{AoI} and \ac{PAoI} violation probability \aoivio, \paoivio}
 			\psfrag{aaaaaaaaaaa}[tc][tc][0.9] {Timeout, \timeout}
 			\psfrag{a}[tc][tc][0.9] {Average \ac{AoI}, $\bar{\Delta}$}
 			
 			\psfrag{A111111111111111111111111}[Bl][Bl][0.59] {\ac{AoI} violation probability, \aoivio{} }
 			\psfrag{b1}[Bl][Bl][0.59]   {\ac{PAoI} violation probability, \paoivio{} }
 			\psfrag{c1}[Bl][Bl][0.59]   {Average \ac{AoI}, \avgaoi{} }
 			\includegraphics[width=1.0\columnwidth]{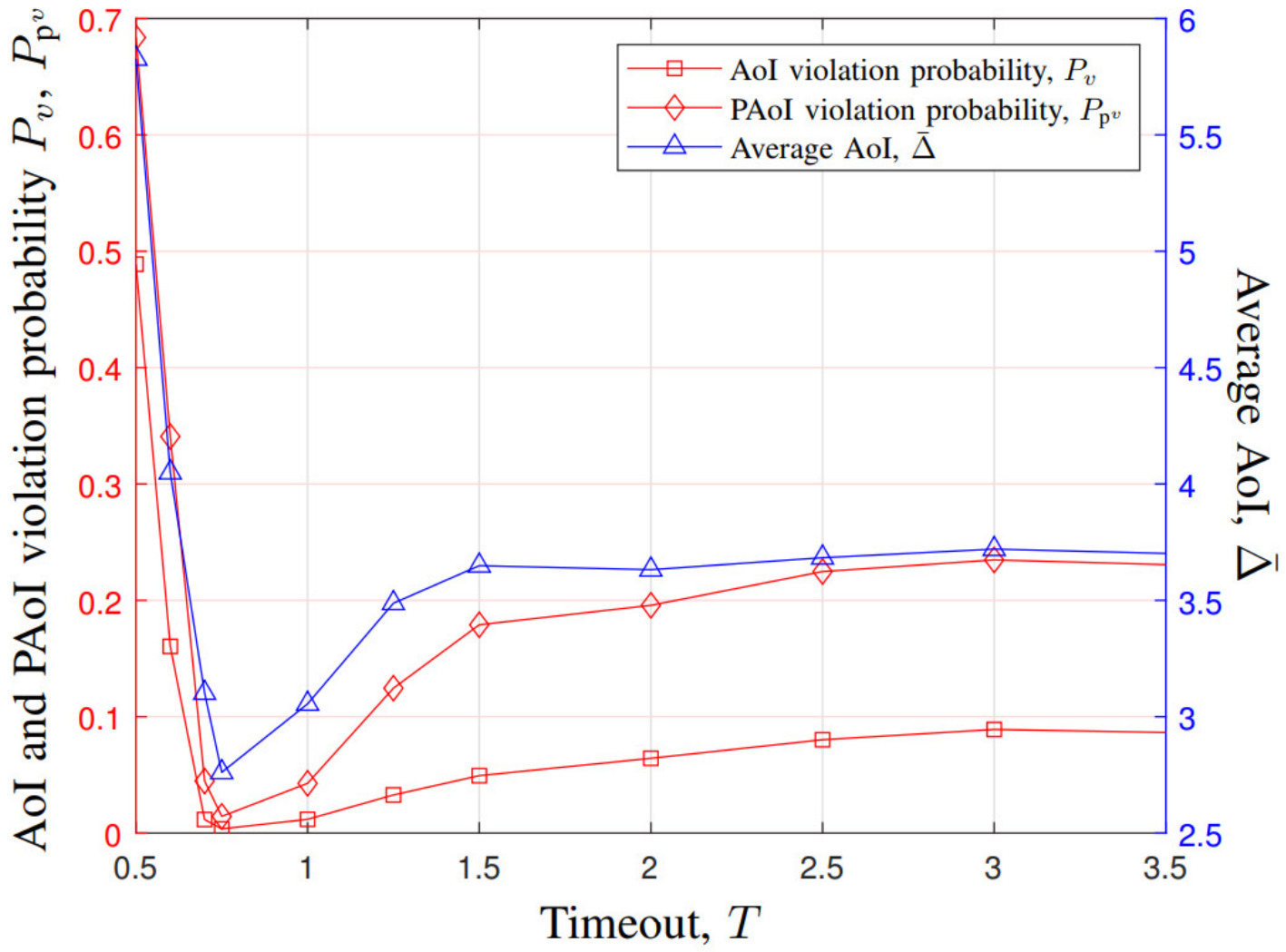}
 		\end{center}
 		
 		\caption{\ac{AoI} violation probability \aoivio{}, \ac{PAoI} violation probability \paoivio{}, and average \ac{AoI} \avgaoi{} as a function of timeout \timeout{} for the target \ac{AoI} $\TargetAoI = 5.5$ and the data size $D = 250$ kb.}
 		
 		\label{fig: timeout}	
 		
 	\end{figure}
 	
 	In Fig. \ref{fig: timeout}, we present \aoivio, \paoivio, and \avgaoi\ for different values of \timeout\ with $\TargetAoI = 5.5$ and $D = 250$ kb. We can see that all metrics first decrease and then increase with \timeout.
 	For small \timeout, the block generation rate in the ordering phase is larger than peers' block commitment rate. Hence, the consensus latency increases, which  increases the \ac{AoI}. In this case, the \ac{AoI}-related performance improves as \timeout\ increases. Nevertheless, as \timeout\ increases, a transaction has to wait longer until the timeout expires. Thus, the \ac{AoI}-related performance degrades with \timeout\ because of the longer latency in the ordering phase. When $\Timeout \times \rho \geq \Blocksize$, most of blocks can be generated before the timeout expires, so the effect of \timeout\ almost vanishes. Therefore, the \ac{AoI}-related performance does not change with \timeout\ any more in this case.

\section{Conclusion} \label{sec: Conclusion}

In this paper, we have studied the statistical characteristics of the \ac{AoI} in \ac{BeMN}. By considering both the transmission latency and the consensus latency, we have presented closed-form expressions for the average \ac{AoI}, the \ac{AoI} violation probability, and the \ac{PAoI} violation probability. We have also obtained an upper and lower bound of the \ac{AoI} violation probability in a simpler form. Further, we have validated our analytical results through the simulations and experiments after constructing the \ac{HLF} platform. We have shown that the analytical results can precisely provide the distribution of the data freshness in \ac{BeMN}.
We have also investigated the effects of \ac{BeMN} parameters on the \ac{AoI} violation probability.
Our results show that: 1)
as the target \ac{STP} increases, the transmission latency becomes longer while the number of reliably received packets for status updates increases, and 2) as the block size or the timeout in an \ac{HLF} network becomes larger, the ordering phase latency increases while the latency in the validation phase decreases due to a smaller load from a lower block generation rate. We have shown that \ac{BeMN} should be properly designed by considering the conflicting effects of the \ac{BeMN} parameters on the \ac{AoI}. The proposed framework can thus be a useful guideline for the design of \ac{BeMN}.

\begin{appendix}

\subsection{Proof of Proposition~\ref{prop 1}} \label{Appendix A}
	From \eqref{channel_capacity} and \eqref{Maximum target rate}, $\bar{\epsilon}_{\Location}$ can be represented as
	\begin{equation}
	\bar{\epsilon}_{\Location} = \max \ 
	\{
	\epsilon \
	|
	\
	\P \left[ \gamma_{x_\text{o}} \geq 2^{\frac{\epsilon}{W}} - 1 \right] \geq \TargetSTP 
	\}.
	\label{Target_STP_2}
	\end{equation}
	From \eqref{SINR_definition}, $\P \left[ \gamma_{x_\text{o}} \geq 2^{\frac{\epsilon}{W}} - 1 \right]$ in \eqref{Target_STP_2} can be given by
	%
	%
	%
	\begin{align}
	\P \hspace{-0.5mm}
	\left[ \hspace{-0.5mm}
	\gamma_{\Location} \hspace{-0.9mm} \geq \hspace{-0.5mm} 2^{\frac{\epsilon}{W}} \hspace{-1.2mm} - \hspace{-0.9mm} 1 
	\right] \hspace{-0.7mm}
	= \hspace{-0.7mm}
	\exp \hspace{-0.9mm}
	\left(
	\hspace{-0.9mm} -\frac{l^n}{P} N_0 W \theta(\epsilon) \hspace{-0.9mm}
	\right) \hspace{-0.9mm}
	\E_{I_{{\Location}}} \hspace{-1.3mm} \left[ 
	\exp \hspace{-0.9mm} 
	\left(\hspace{-0.9mm}
	-\frac{l^n}{P} I_{{\Location}} \theta(\epsilon) \hspace{-0.9mm}  
	\right) \hspace{-0.9mm} 
	\right] \hspace{-0.9mm} ,
	\label{STP_befor_laplace_2}
	\end{align}
	where $\theta(\epsilon) = 2^{\epsilon/W}-1$ by using the \ac{CDF} of the exponential random variable $h_{{\Location}}$.
	
	With the assumption that the distribution of interfering sources follows the \ac{HPPP},\footnote{Note that the distribution of the interfering sources does not follow \ac{HPPP} because their locations are dependent. Nevertheless, this dependency is shown to be weak in \cite{ThHaDhJe:13}.} the Laplace transform of $I_{{\Location}}$ can be given by \cite[Equation 3.21]{Ha:09}
	\begin{align}
	\mathcal{L}_{I_{\Location}} (s) = \exp \left\{ -\lambda \pi s^{2/n} \frac{2 \pi }{n \sin(2 \pi /n)}      \right\}. \label{laplace_interference_1}
	\end{align}
	%
	%
	%
	%
	%
	Using \eqref{laplace_interference_1}, \eqref{STP_befor_laplace_2} can be derived as follows
	\begin{align}
	\P \left[ \gamma_{x_\text{o}} \geq 2^{\frac{\epsilon}{W}} - 1 \right] &= \exp\left( \hspace{-0.5mm} -\frac{l^n}{P} N_0 W \theta({\epsilon}) \hspace{-0.5mm} \right) 
	\nonumber \\
	&\quad \times \exp \left( -\lambda \pi^2 \frac{2l^2 \theta({\epsilon})^{2/n}}{n P^{2/n} \sin \left(2\pi/n  \right)}   \right).
	\label{Target_STP_not_closed}
	\end{align}
	By substituting \eqref{Target_STP_not_closed} into \eqref{Target_STP_2}, $\bar{\epsilon}_{\Location}$ in \eqref{Target_STP_2} can be given by 
	\begin{align}
	\bar{\epsilon}_{\Location} \hspace{-0.5mm} &= \hspace{-0.5mm} \max  \hspace{-0.5mm}
	\left\{ \hspace{-0.5mm}
	\epsilon \ \hspace{-0.5mm} \bigg| 
	\exp \hspace{-0.7mm}
	\left(
	\hspace{-0.9mm} -\frac{l^n}{P} N_0 W \theta({\epsilon}) \hspace{-0.5mm}   
	- \hspace{-0.5mm}
	\frac{-2\lambda \pi^2 l^2 \theta({\epsilon})^{2/n}}{n P^{2/n} \sin \left(2\pi/n  \right)} \hspace{-0.7mm}
	\right)  \hspace{-0.9mm}
	\geq \TargetSTP  \hspace{-0.9mm}
	\right\} \hspace{-0.9mm} . \label{Maximum target STP max}
	\end{align}
	Since $p_c$ is a decreasing function of $\epsilon$, the value of $\bar{\epsilon}_{\Location}$ is the largest $\epsilon$ that satisfies $p_c \geq \TargetSTP$. Hence, $\bar{\epsilon}_{\Location}$ in \eqref{Maximum target STP max} is the same as the one satisfying \eqref{transmissionrate fixed r}.

\subsection{Proof of Theorem~\ref{theorm 1}}\label{app:theorem 1}

	In \eqref{ratio_of_E}, the expectation of $T^\TargetAoI_{k}$ is given by
	\begin{align}
	\E[T^\TargetAoI_{k}] = \integral{0}{\infty} \P[T^\TargetAoI_{k} \geq z] \mathrm{d}z.  \label{expectation_t_k^v}
	\end{align}
	%
	%
	%
	%
	Using \eqref{min_T^v_k} and \eqref{expectation_t_k^v}, $\E[T^\TargetAoI_{k}]$ can be represented by
	\begin{align}
	\E[T^\TargetAoI_{k}]  &= \integral{0}{\Remain} \hspace{-2.0mm}
	\underbrace{ \integral{0}{\infty} \hspace{-2.0mm}
	\P \hspace{-0.5mm} \left[  x \hspace{-0.5mm} + \hspace{-0.5mm} X_{k} \hspace{-0.5mm} + \hspace{-0.5mm} \Tint \hspace{-0.5mm} - \hspace{-0.5mm}  \Remain \hspace{-0.5mm}  \geq \hspace{-0.5mm} z \right] \hspace{-0.5mm} f_{X_{k-1}}(x)  \mathrm{d}z}_{E_1(x)} \mathrm{d}x \nonumber \\
	& \quad + \integral{\Remain}{\infty} \hspace{-1.0mm} \underbrace{ \integral{0}{\infty} \P[X_k + \Tint \geq z] f_{X_{k-1}} (x)\ \mathrm{d}z}_{E_2(x)} \mathrm{d}x, 
	\label{T^v_k} 
	\end{align}
	where $f_{X_{k-1}}(x)$ is in \eqref{gamma_pdf}.
	In \eqref{T^v_k}, $\P \left[  x \hspace{-0.6mm} + \hspace{-0.6mm} X_{k} \hspace{-0.6mm} + \hspace{-0.6mm} \Tint \hspace{-0.6mm} - \hspace{-0.6mm}  \Remain \hspace{-0.9mm} \geq  \hspace{-0.3mm} z \hspace{-0.1mm} \right]$ can be given by
	\begin{align}
	&\P \left[ \Tint  \geq z + \Remain \hspace{-0.5mm} - x - X_k \right] \nonumber \\
	&= \integral{0}{\infty} \P 
	\left[ 
	\Tint \geq z + \Remain - x - w | X_k = w
	\right]  
	f_{X_k} (w) \mathrm{d}w
	\nonumber \\
	& \hspace{-0.9mm} = \hspace{-0.5mm}  \integral{0}{z + \Remain - x} \hspace{-2.0mm}  e^{-\rho \left( z + \Remain -x -w  \right)}  f_{X_k} (w) \mathrm{d}w \nonumber \\
	&\quad + \integral{z + \Remain - x}{\infty} \hspace{-2.0mm}  f_{X_k} (w) \mathrm{d}w \nonumber \\ 
	& \hspace{-0.9mm} = \hspace{-0.9mm}  \frac{\beta^\alpha \hspace{-0.5mm} e^{-\rho \left(z \hspace{-0.3mm} +  \hspace{-0.3mm} \Remain \hspace{-0.3mm} - \hspace{-0.3mm} x  \right)  }}{\Gamma(\alpha)}
	\frac{\gamma \hspace{-0.5mm} 
		\left(
		\hspace{-0.5mm} \alpha, \left( \hspace{-0.5mm} \beta \hspace{-0.5mm} - \hspace{-0.5mm} \rho   \hspace{-0.5mm}     \right) \hspace{-0.5mm}  \left( \hspace{-0.5mm} z \hspace{-0.5mm}  + \hspace{-0.5mm}  \Remain  \hspace{-0.5mm} -   \hspace{-0.5mm}  x  \hspace{-0.5mm} \right)   \hspace{-0.5mm}   
		\right) }
	{(\beta - \rho)^\alpha} \hspace{-0.5mm} \nonumber \\ 
	& \quad + \hspace{-0.5mm} \frac{\Gamma \hspace{-0.7mm}
		\left( 
		\hspace{-0.5mm} \alpha, \beta \hspace{-0.5mm} \left( \hspace{-0.5mm}  z \hspace{-0.5mm}  + \hspace{-0.5mm}  \Remain \hspace{-0.5mm}  - \hspace{-0.5mm}  x  \hspace{-0.5mm} \right) \hspace{-0.5mm}  
		\right)  }
	{\Gamma(\alpha)}, \label{Just Prob}
	\end{align}
	where the second equation is obtained from the fact that $\P \left[ \Tint \geq z + \Remain - x - X_k | X_k \right]$ is always one when $X_k$ is larger than $z + \Remain - x$ and using the exponential distribution of $\Tint$.  In \eqref{Just Prob}, $\gamma(\cdot, \cdot)$ and $\Gamma(\cdot, \cdot)$ are the lower and upper incomplete gamma functions, respectively, i.e.,
	\begin{align}
	\hspace{-0.9mm}	\gamma(\alpha, x) \hspace{-0.5mm} =  \hspace{-0.5mm} \integral{0}{x}  \hspace{-0.7mm} t^{\alpha - 1}  \hspace{-0.5mm} e^{-t} \mathrm{d}t \ \text{and} \ \Gamma(\alpha, x)  \hspace{-0.5mm} =  \hspace{-0.5mm} \integral{x}{\infty}  \hspace{-1.1mm} t^{\alpha - 1}  \hspace{-0.5mm} e^{-t} \mathrm{d}t. \label{definition_of_incomplete}
	\end{align}
	Using \eqref{Just Prob}, $E_1(x)$ in \eqref{T^v_k} can be obtained as
	\begin{align}
	%
	&E_1(x) \hspace{-0.7mm} = \hspace{-0.9mm} \integral{0}{\infty} \hspace{-0.9mm} \left[ \frac{\beta^\alpha e^{-\rho \left(  z + \Remain - x  \right)  } \gamma \left( \alpha, \left(  \beta  -\rho        \right) \hspace{-0.7mm} \left( z + \Remain -x   \right)      \right) }{\Gamma(\alpha) (\beta - \rho)^\alpha} \right. \nonumber \\ 
	&\left. \quad \quad \quad \quad \quad \quad + \frac{\Gamma\left(  \alpha, \beta \left( z + \Remain - x  \right) \right)  }{\Gamma(\alpha)} \right] \mathrm{d}z \nonumber \\
	&\overset{(a)}{=} \hspace{-0.7mm} \frac{\beta^\alpha }{ \Gamma(\alpha)} \hspace{-0.9mm} \summation{n=0}{\infty} \hspace{-0.7mm} \frac{ \left(   \rho \hspace{-0.5mm}  -  \hspace{-0.5mm}  \beta \right)^{n} }   {n! \left(  \alpha  \hspace{-0.5mm}  + \hspace{-0.5mm}  n  \right)} \hspace{-0.7mm} \integral{0}{\infty} \hspace{-4.5mm}   \left(z \hspace{-0.5mm}  +  \hspace{-0.5mm} \Remain \hspace{-0.5mm}   -  \hspace{-0.5mm}  x  \right)^{\alpha + n} e^{-\rho \left( z + \Remain -  x  \right)} \mathrm{d}z \nonumber \\
	& \quad + \frac{1}{\Gamma(\alpha)} \integral{0}{\infty}  \integral{\beta(z + \Remain -x )}{\infty}   t^{\alpha-1} e^{-t} \mathrm{d}t \mathrm{d}z   \nonumber  \\
	&\overset{(b)}{=} \underbrace{\frac{\beta^\alpha}{\Gamma(\alpha) } \summation{n=0}{\infty} \frac{\left(  \rho - \beta \right)^n \Gamma\left( \alpha + n + 1, \rho \left(\Remain - x \right)   \right) }{n! \left( \alpha + n \right) \rho^{\alpha +n +1}} }_{f_1(x)} \nonumber \\ 
	& \quad + \underbrace{\frac{ \Gamma \left( \alpha \hspace{-0.5mm} + \hspace{-0.5mm} 1, \beta \left( \Remain \hspace{-0.5mm} - \hspace{-0.5mm} x \right)    \right) }  {\beta \Gamma(\alpha)}}_{f_2(x)} \hspace{-0.7mm} + \hspace{-0.7mm} \underbrace{\frac{ \left( x \hspace{-0.5mm} - \hspace{-0.5mm} \Remain \right) }{\Gamma(\alpha) } \Gamma\left(   \alpha, \beta \left( \Remain \hspace{-0.5mm} - \hspace{-0.5mm} x  \right)   \right)}_{f_3(x)},  \label{substitution}
	\end{align}
	where $(a)$ follows from the fact that the lower incomplete Gamma function $\gamma(\alpha, x)$ can be represented as \cite[Equation 8.354-1]{ToI}
	\begin{align}
	\gamma(\alpha, x) = \summation{n=0}{\infty} \frac{(-1)^n x^{\alpha + n}}{n! (\alpha + n)},  \label{8.354-1}
	\end{align}
	and (b) is obtained by changing the order of the integral and \eqref{definition_of_incomplete}. Using \eqref{substitution}, the integral of $E_1(x)$ in \eqref{T^v_k} can be represented as
	\begin{align}
	&\integral{0}{\Remain} \hspace{-2.0mm} E_1(x) \mathrm{d}x \hspace{-0.4mm} = \hspace{-0.7mm} \integral{0}{\Remain} \hspace{-0.7mm} \left\{f_1(x) \hspace{-0.7mm} + \hspace{-0.7mm} f_2(x) \hspace{-0.7mm}  + \hspace{-0.7mm} f_3(x) \right\} f_{X_{k-1}}(x) \mathrm{d}x , \label{f_substitution}
	\end{align}
	where $f_i(x)$ are defined in \eqref{substitution} for $i \in \{1, 2, 3\}$. In \eqref{f_substitution}, the first term can be given by 
	\begin{align}
	&\integral{0}{\Remain} \hspace{-1.5mm}  f_1(x) f_{X_{k-1}}(x) \mathrm{d}x \nonumber \\
	&\overset{(a)} {=} \frac{\beta^{2\alpha}}{\Gamma(\alpha)^2 } \summation{n=0}{\infty} \frac{\left(  \rho - \beta \right)^n}{n! \left( \alpha + n \right) \rho^{\alpha +n +1}} \integral{0}{\Remain} \bigg[  \Gamma\left( \alpha +n + 1 \right)   \nonumber \\
	& \quad  - \summation{k=0}{\infty} \frac{(-1)^k \left\{ \rho \left ( \Remain -x \right)  \right\}^{\alpha+n+k+1} }{k! \left( \alpha +n+k+1 \right)}  \bigg ] x^{\alpha - 1} e^{-\beta x} \mathrm{d}x  \nonumber \\
	%
	& \overset{(b)} {=} \hspace{-0.5mm} \frac{\beta^{2\alpha}}{\Gamma(\alpha)^2 } \hspace{-0.7mm} \summation{n=0}{\infty} \hspace{-0.5mm} \frac{\left(  \rho - \beta \right)^n}{n! \left( \alpha + n \right) \rho^{\alpha +n +1}} \hspace{-0.5mm} \bigg[ \frac{ \Gamma\left( \alpha \hspace{-0.5mm} + \hspace{-0.5mm} n \hspace{-0.5mm}  + \hspace{-0.5mm}  1 \right)} {\beta^\alpha} \gamma\left(\alpha, \beta \Remain \right)  \nonumber \\
	& \quad -\summation{k=0}{\infty} \frac{(-1)^k (\rho \Remain)^{\alpha+n+k+1} }{k! \left( \alpha +n+k+1 \right)}  \Remain^{\alpha} B\left(\alpha \hspace{-0.5mm} + \hspace{-0.5mm} n \hspace{-0.5mm} + \hspace{-0.5mm} k \hspace{-0.5mm} + \hspace{-0.5mm} 2, \alpha \right)    \nonumber \\
	& \quad \quad \quad \quad \times {_1}F_1 \left( \alpha; 2\alpha  \hspace{-0.5mm} + \hspace{-0.5mm}  n  \hspace{-0.5mm} +  \hspace{-0.5mm} k  \hspace{-0.5mm}  + \hspace{-0.5mm}  2; \beta \Remain \right) \bigg],  \label{1_3}
	\end{align}
	%
	where $B(\cdot, \cdot)$ is the beta function, and ${_1}F_1 (\cdot \ ; \cdot \ ; \cdot)$ is the confluent hypergeometric function. In \eqref{1_3}, $(a)$ is obtained since $\Gamma(\alpha, x) = \Gamma(\alpha) - \gamma(\alpha, x),$ and $(b)$ follows from \cite[Equation 3.383-1]{ToI}. 
	Similarly, in \eqref{f_substitution}, the second term can be obtained as
	\begin{align}
	&\integral{0}{\Remain}  f_2(x) f_{X_{k-1}}(x)\mathrm{d}x \nonumber \\
	&=\frac{\beta^{2\alpha}}{\Gamma(\alpha)^2} \hspace{-0.7mm}
	\integral{0}{\Remain} \hspace{-0.9mm}
	\left[ \hspace{-0.0mm} \Gamma  ( \alpha + \hspace{-0.0mm}  1 )  - \hspace{-0.0mm} \summation{n=0}{\infty} \hspace{-0.0mm} \frac{(-\beta)^n \hspace{-0.0mm} \left ( \Remain \hspace{-0.0mm}  -  \hspace{-0.0mm} x \right)^{\alpha \hspace{-0.0mm} + \hspace{-0.0mm} n \hspace{-0.0mm} + \hspace{-0.0mm} 1} }{n! \left( \alpha  \hspace{-0.0mm} +  \hspace{-0.0mm}n \hspace{-0.0mm} + \hspace{-0.0mm} 1 \right)} \right] \nonumber \\
	& \quad \quad \quad \quad \quad \times \hspace{-0.0mm} x^{\alpha \hspace{-0.0mm} - \hspace{-0.0mm} 1}  \hspace{-0.0mm} e^{-\beta x} \hspace{-0.0mm} \mathrm{d}x \nonumber \\
	&= \hspace{-0.5mm} \frac{\alpha \gamma(\alpha, \beta \Remain)}
	{\beta \Gamma(\alpha)} \hspace{-0.5mm}
	 -\hspace{-0.5mm}
	  \summation{n=0}{\infty} \hspace{-0.7mm}
	  \frac{(-1)^n (\beta \Remain)^{2\alpha+n} }
	  {n! \left( \alpha +n+1 \right)}
	  B(\alpha \hspace{-0.5mm} +  \hspace{-0.5mm}  n \hspace{-0.5mm} +  \hspace{-0.5mm} 2, \alpha) \nonumber \\
	&\quad \quad \quad \quad \quad \quad \quad \quad \quad \quad  \hspace{-1.2mm} \times \hspace{-0.7mm} \frac{\Remain}
	{\Gamma(\alpha)^2}
	{_1}F_1(\alpha; 2\alpha \hspace{-0.5mm} +  \hspace{-0.5mm} n  \hspace{-0.5mm}+ \hspace{-0.5mm} 2 ; -\beta \Remain) 
	 . \label{2_2}
	\end{align}   
	The third term in \eqref{f_substitution} is given by
	\begin{align}
	&\integral{0}{\Remain} f_3(x) f_{X_{k-1}}(x) \mathrm{d}x \nonumber \\
	%
	&\overset{(a)}{=} -\frac{\beta^\alpha}{\Gamma(\alpha)^2} \hspace{-0.5mm}
	\integral{0}{\Remain} \hspace{-0.9mm}
	k \Gamma(\alpha, \beta k) (\Remain -k)^{\alpha - 1} e^{-\beta (\Remain - k)} \mathrm{d}k \nonumber \\
	&\overset{(b)}{=} -\frac{\beta^\alpha}{\Gamma(\alpha)^2} \integral{0}{\Remain} \left\{ \Gamma(\alpha)  - \summation{n=0}{\infty} \frac{(-1)^n (\beta k)^{\alpha +n}  }{n! (\alpha +n)}  \right\} \nonumber \\
	& \quad \times 
	k (\Remain - k)^{\alpha - 1} e^{-\beta (\Remain - k)} \mathrm{d}k \nonumber\\
	&\overset{(c)}{=} -\frac{\Remain(\beta \Remain)^\alpha e^{-\beta \Remain} }{\Gamma(\alpha)} B(\alpha, 2)  {_1}F_1(2; \alpha \hspace{-0.5mm} + \hspace{-0.5mm} 2; \beta \Remain) \nonumber \\
	& \quad + \frac{\Remain e^{-\beta \Remain}}{\Gamma(\alpha)^2} \summation{n=0}{\infty} \frac{(-1)^n (\beta \Remain)^{2\alpha +n}  }{n! (\alpha +n)} B(\alpha, \alpha \hspace{-0.5mm} +  \hspace{-0.5mm} n \hspace{-0.5mm}  + \hspace{-0.5mm} 2) \nonumber \\
	& \quad \quad \quad \quad \quad \quad \quad \quad \times 
	{_1}F_1(\alpha \hspace{-0.5mm}+\hspace{-0.5mm} n \hspace{-0.5mm}+\hspace{-0.5mm} 2; 2\alpha  \hspace{-0.5mm} + \hspace{-0.5mm} n \hspace{-0.5mm} + \hspace{-0.5mm} 2; \beta \Remain) \nonumber \\
	&\overset{(d)}{=} -\frac{\Remain(\beta \Remain)^\alpha }{\Gamma(\alpha)} B(\alpha, 2)  {_1}F_1(\alpha; \alpha \hspace{-0.5mm} + \hspace{-0.5mm} 2; -\beta \Remain) \nonumber \\
	& \quad + \frac{\Remain}{\Gamma(\alpha)^2} \summation{n=0}{\infty} \frac{(-1)^n (\beta \Remain)^{2\alpha +n}  }{n! (\alpha +n)} B(\alpha, \alpha \hspace{-0.5mm} +  \hspace{-0.5mm} n \hspace{-0.5mm}  + \hspace{-0.5mm} 2) \nonumber \\
	& \quad \quad \quad \quad \quad \quad \quad \times {_1}F_1(\alpha; 2\alpha  \hspace{-0.5mm} + \hspace{-0.5mm} n \hspace{-0.5mm} + \hspace{-0.5mm} 2; -\beta \Remain), \label{3_3}
	\end{align}
	where $(a)$ is obtained by substituting $k$ for $\Remain - x$, $(b)$ follows from \eqref{8.354-1}, $(c)$ is obtained by the similar steps, which are used in \eqref{1_3}, and $(d)$ stems from ${_1}F_1 (a;b;z) = e^z {_1}F_1(b-a;b;-z)$ in \cite[Equation 9.212-1]{ToI}.  

	Now, we derive the integral of $E_2(x)$ in \eqref{T^v_k}. In the integral range of $z$, $X_k + \Tint \geq 0$ holds. Hence, we have
	\begin{align}
	&\integral{\Remain}{\infty} E_2(x) \mathrm{d}x =  \integral{\Remain}{\infty} \hspace{-0.5mm} \integral{0}{\infty} \hspace{-0.5mm}
	 \P[X_k + \Tint \geq z]  \mathrm{d}z f_{X_{k-1}} (x) \mathrm{d}x \nonumber  \\ 
	&\overset{(a)}{=} \left( \E[X_k] + \E [\Tint] \right) \integral{\Remain}{\infty} f_{X_{k-1}} (x) \mathrm{d}x \nonumber \\ 
	&= \left( \frac{\alpha}{\beta} + \frac{1}{\rho}  \right) \frac{\Gamma(\alpha, \beta \Remain)}{\Gamma(\alpha)} \label{T_ak_2},
	\end{align}
	where $(a)$ is obtained since $X_k$ and $\Tint$ are independent.
	
	Finally, $\E[T^\TargetAoI_{k}]$ is the summation of \eqref{1_3}, \eqref{2_2}, \eqref{3_3}, and \eqref{T_ak_2}.
	Note that $ \displaystyle  \E[T_k] = \E[X_k + \Tint] = \frac{\alpha}{\beta} + \frac{1}{\rho}$. Therefore, we obtain \aoivio{} as the ratio of $\E[T^\TargetAoI_{k}]$ and $\E[T_k]$ as \eqref{AoI_violation_Prob}.   

\end{appendix}

\bibliographystyle{IEEEtran}
\bibliography{StringDefinitions,IEEEabrv,mybib}

\begin{thebibliography}{10}
\providecommand{\url}[1]{#1}
\csname url@samestyle\endcsname
\providecommand{\newblock}{\relax}
\providecommand{\bibinfo}[2]{#2}
\providecommand{\BIBentrySTDinterwordspacing}{\spaceskip=0pt\relax}
\providecommand{\BIBentryALTinterwordstretchfactor}{4}
\providecommand{\BIBentryALTinterwordspacing}{\spaceskip=\fontdimen2\font plus
\BIBentryALTinterwordstretchfactor\fontdimen3\font minus
  \fontdimen4\font\relax}
\providecommand{\BIBforeignlanguage}[2]{{%
\expandafter\ifx\csname l@#1\endcsname\relax
\typeout{** WARNING: IEEEtran.bst: No hyphenation pattern has been}%
\typeout{** loaded for the language `#1'. Using the pattern for}%
\typeout{** the default language instead.}%
\else
\language=\csname l@#1\endcsname
\fi
#2}}
\providecommand{\BIBdecl}{\relax}
\BIBdecl

\bibitem{MKSLCPJL:21}
M.~Kim, S.~Lee, C.~Park, and J.~Lee, ``Age of information analysis in
  hyperledger fabric blockchain-enabled monitoring networks,'' in \emph{Proc.
  IEEE Int. Conf. Commun.}, Monteral, Canada, Jun. 2021, pp. 1--6.

\bibitem{ON:18}
O.~{Novo}, ``Blockchain meets iot: An architecture for scalable access
  management in iot,'' \emph{{IEEE} Internet Things J.}, vol.~5, no.~2, pp.
  1184--1195, Apr. 2018.

\bibitem{SaRoMa:12}
S.~Kaul, R.~Yates, and M.~Gruteser, ``Real-time status: How often should one
  update?'' in \emph{Proc. IEEE Conf. on Computer Commun.}, Orlando, FL, USA,
  Mar. 2012, pp. 1--5.

\bibitem{MaMaAn:16}
M.~Costa, M.~Codreanu, and A.~Enpremides, ``On the age of information in status
  update systems with packet management,'' \emph{{IEEE} Trans. Inf. Theory},
  vol.~62, no.~4, pp. 1897--1910, Apr. 2016.

\bibitem{LeDo:19}
L.~Hang and D.~Kim, ``Design and implementation of an integrated {IoT}
  blockchain platform for sensing data integrity,'' \emph{Sensors}, vol.~19,
  no.~10, p. 2228, May 2019.

\bibitem{LaPe:20}
L.~D. Nguyen, A.~E. Kalør, I.~Leyva-Mayorga, and P.~Popovski, ``Design and
  implementation of an integrated {IoT} blockchain platform for sensing data
  integrity,'' \emph{{IEEE} Commun. Mag.}, vol.~58, no.~6, pp. 77--83, Jun.
  2020.

\bibitem{AlCa:20}
S.~M. Alrubei, E.~A. Ball, J.~M. Rigelsford, and C.~A. Willis, ``Latency and
  performance analyses of real-world wireless iot-blockchain application,''
  \emph{{IEEE} Sensors J.}, vol.~20, no.~13, pp. 7372--7383, Jul. 2020.

\bibitem{MABS:19}
M.~{Alaslani}, F.~{Nawab}, and B.~{Shihada}, ``Blockchain in {IoT} systems:
  End-to-end delay evaluation,'' \emph{{IEEE} Internet Things J.}, vol.~6,
  no.~5, pp. 8332--8344, Oct. 2019.

\bibitem{ArAb:20}
A.~Rovira-Suganes and A.~Razi, ``Optimizing the age of information for
  blockchain technology with applications to iot sensors,'' \emph{{IEEE}
  Commun. Lett.}, vol.~24, no.~1, pp. 183--187, Jan. 2020.

\bibitem{DoSa:20}
D.~{Li}, S.~{Wu}, Y.~{Wang}, J.~{Jiao}, and Q.~{Zhang}, ``Age-optimal {HARQ}
  design for freshness-critical satellite-iot systems,'' \emph{{IEEE} Internet
  Things J.}, vol.~7, no.~3, pp. 2066--2076, Jun. 2020.

\bibitem{BaYu:20}
B.~{Yu}, Y.~{Cai}, D.~{Wu}, and Z.~{Xiang}, ``Average age of information in
  short packet based machine type communication,'' \emph{{IEEE} Trans. Veh.
  Technol.}, vol.~69, no.~9, pp. 10\,306--10\,319, Sep. 2020.

\bibitem{AlMo:20}
A.~{Maatouk}, M.~{Assaad}, and A.~{Ephremides}, ``On the age of information in
  a csma environment,'' \emph{{IEEE/ACM} Trans. Netw.}, vol.~28, no.~2, pp.
  818--831, Feb. 2020.

\bibitem{MAHD:19}
M.~A. Abd-Elmagid, N.~Pappas, and H.~S. Dhillon, ``On the role of age of
  information in the internet of things,'' \emph{{IEEE} Commun. Mag.}, vol.~57,
  no.~12, pp. 72--77, Dec. 2019.

\bibitem{XSZN:19}
X.~{Zheng}, S.~{Zhou}, Z.~{Jiang}, and Z.~{Niu}, ``Closed-form analysis of
  non-linear age of information in status updates with an energy harvesting
  transmitter,'' \emph{{IEEE} Trans. Wireless Commun.}, vol.~18, no.~8, pp.
  4129--4142, Aug. 2019.

\bibitem{BoWa:19}
B.~{Zhou} and W.~{Saad}, ``Minimum age of information in the internet of things
  with non-uniform status packet sizes,'' \emph{{IEEE} Trans. Wireless
  Commun.}, vol.~19, no.~3, pp. 1933--1947, Mar. 2020.

\bibitem{ChWa:20}
C.~Chaccour and W.~Saad, ``On the ruin of age of information in augmented
  reality over wireless terahertz (thz) networks,'' in \emph{Proc. IEEE Global
  Telecomm. Conf.}, Dec. 2020, pp. 1--6.

\bibitem{Git}
Https://github.com/hyperledger/fabric/releases/tag/v1.3.0.

\bibitem{KuSa:19}
M.~Kuzlu, M.~Pipattanasomporn, L.~Gurses, and S.~Rahman, ``Performance analysis
  of a hyperledger fabric blockchain framework: Throughput, latency and
  scalability,'' in \emph{Proc. IEEE Int. Conf. Blockchain (Blockchain).},
  Atlanta, GA, USA, Jul. 2019, pp. 536--540.

\bibitem{E1ChChSrChBiMaCh:18}
E.~Androulaki, A.~Barger, V.~Bortnikov, C.~Cachin, K.~Christidis, A.~De~Caro,
  D.~Enyeart, C.~Ferris, G.~Laventman, Y.~Manevich \emph{et~al.}, ``Hyperledger
  {Fabric}: a distributed operating system for permissioned blockchains,'' in
  \emph{Proc. EuroSys Conference}, Apr. 2018, pp. 1--15.

\bibitem{SLMKJL:20}
S.~Lee, M.~Kim, J.~Lee, R.-H. Hsu, and T.~Q.~S. Quek, ``Is blockchain suitable
  for data freshness? age-of-information perspective,'' \emph{{IEEE} Netw.},
  pp. 96 -- 103.

\bibitem{ALSJ:09}
A.~{Ledeczi}, T.~{Hay}, P.~{Volgyesi}, D.~R. {Hay}, A.~{Nadas}, and
  S.~{Jayaraman}, ``Wireless acoustic emission sensor network for structural
  monitoring,'' \emph{{IEEE} Sensors J.}, vol.~9, no.~11, pp. 1370--1377, Nov.
  2009.

\bibitem{IMTB:14}
I.~{Martin}, T.~{O’Farrell}, R.~{Aspey}, S.~{Edwards}, T.~{James},
  P.~{Loskot}, T.~{Murray}, I.~{Rutt}, N.~{Selmes}, and T.~{Baugé}, ``A
  high-resolution sensor network for monitoring glacier dynamics,''
  \emph{{IEEE} Sensors J.}, vol.~14, no.~11, pp. 3926--3931, Nov. 2014.

\bibitem{ZJGR:03}
J.~Zhao and R.~Govindan, ``Understanding packet delivery performance in dense
  wireless sensor networks,'' in \emph{Proc. ACM SENSYS}, Nov. 2003, pp. 1--13.

\bibitem{AVAR:17}
A.~{Valehi} and A.~{Razi}, ``Maximizing energy efficiency of cognitive wireless
  sensor networks with constrained {Age of Information},'' \emph{{IEEE} Trans.
  Cogn. Commun. Netw.}, vol.~3, no.~4, pp. 643--654, Dec. 2017.

\bibitem{SLMKJL:21}
S.~Lee, M.~Kim, J.~Lee, R.~Hsu, and T.~Quek, ``Latency modeling of hyperledger
  fabric for blockchain-enabled {IoT} networks,'' \emph{arXiv:2102.09166},
  2021.

\bibitem{Thom:58}
H.~C. Thom, ``A note on the gamma distribution,'' \emph{Monthly Weather
  Review}, vol.~86, no.~4, pp. 117--122, 1958.

\bibitem{ISAFDB:71}
N.~T. Ison, A.~M. Feyerherm, and L.~D. Bark, ``Wet period precipitation and the
  gamma distribution,'' \emph{Journal of Applied Meteorology}, vol.~10, no.~4,
  pp. 658--665, Aug. 1971.

\bibitem{JSJCNK:05}
J.~W. Shin, J.-H. Chang, and N.~S. Kim, ``Statistical modeling of speech
  signals based on generalized gamma distribution,'' \emph{{IEEE} Signal
  Process. Lett.}, vol.~12, no.~3, pp. 258--261, Mar. 2005.

\bibitem{ChJe:18}
C.~Park and J.~Lee, ``Mobile edge computing-enabled heterogeneous networks,''
  \emph{{IEEE} Trans. Wireless Commun.}, vol.~20, no.~2, pp. 1038--1051, Oct.
  2020.

\bibitem{YsEbRdCkNs:17}
Y.~Sun, E.~Uysal-Biyikoglu, R.~D. Yates, C.~E. Koksal, and N.~B. Shroff,
  ``Update or wait: How to keep your data fresh,'' \emph{{IEEE} Trans. Inf.
  Theory}, vol.~63, no.~11, pp. 7492--7508, Nov. 2017.

\bibitem{JaHuJa:19}
J.~P. Champati, H.~Al-Zubaidy, and J.~Gross, ``On the distribution of age of
  information for the {GI/GI/1/1} and {GI/GI/1/2*} systems: Exact expressions
  and bounds,'' in \emph{Proc. IEEE Conf. on Computer Commun.}, Paris, France,
  Apr. 2019, pp. 37--45.

\bibitem{ToI}
I.~S. Gradshteyn and I.~M. Ryzhik, \emph{Table of integrals, series, and
  products}.\hskip 1em plus 0.5em minus 0.4em\relax Academic press, 2014.

\bibitem{ANGTANG}
A.~H.-S. Ang and W.~H. Tang, \emph{Probability concepts in engineering planning
  and design: Emphasis on application to civil and environmental
  engineering}.\hskip 1em plus 0.5em minus 0.4em\relax Wiley, 2007.

\bibitem{PSBV:18}
P.~{Thakkar}, S.~{Nathan}, and B.~{Viswanathan}, ``Performance benchmarking and
  optimizing hyperledger fabric blockchain platform,'' in \emph{Proc. IEEE Int.
  Symp. on Modeling, Analysis, and Simulation of Computer and Telecommunication
  Systems (MASCOTS)}, Milwaukee, WI, USA, Sep. 2018, pp. 1--13.

\bibitem{ThHaDhJe:13}
T.~D. Novlan, H.~S. Dhillon, and J.~G. Andrews, ``Analytical modeling of uplink
  cellular networks,'' \emph{{IEEE} Trans. Wireless Commun.}, vol.~12, no.~6,
  pp. 2669--2679, Jun. 2013.

\bibitem{Ha:09}
M.~Haenggi and R.~K. Ganti, \emph{Interference in large wireless
  networks}.\hskip 1em plus 0.5em minus 0.4em\relax Now Publishers Inc, 2009.

\end{thebibliography}

\end{document}